\newcommand{\quotes}[1]{``#1''}
\newtheorem{theorem}{Theorem}
\newtheorem{lemma}{Lemma}
\newtheorem{proposition}{Proposition}
\newcommand*{\Scale}[2][4]{\scalebox{#1}{$#2$}}% to reduce size in equations
\begin{document}

\title{A Convex Optimization Approach for Backstepping PDE Design: Volterra and Fredholm Operators \\ }
\author{Pedro~Ascencio, Alessandro~Astolfi and Thomas~Parisini% <-this % stops a space
\thanks{This work has been supported by the European Unions Horizon 2020 research and innovation programme under grant agreement No 739551 (KIOS CoE).}
\thanks{P. Ascencio is with the Warwick Manufacturing Group (WMG) at the University of Warwick, U.K., (\texttt{p.ascencio.o@gmail.com}).}%
\thanks{A. Astolfi is with the Department of Electrical and Electronic Engineering, Imperial College London, London, U.K., and also with Dipartimento di Informatica, Sistemi e Produzione, Universit\`{a} di Roma \quotes{Tor Vergata}, 00133 Rome, Italy (\texttt{a.astolfi@ic.ac.uk}).}%
\thanks{T. Parisini is with the Department of Electrical and Electronic Engineering, Imperial College London, London, U.K., with the KIOS Research and Innovation Centre of Excellence, University of Cyprus, and also with the Dept. of Engineering and Architecture at the University of Trieste, Italy (\texttt{t.parisini@gmail.com}).}%
}

% make the title area
\maketitle

%%%%%%%%%%%%%%%%%%%%%%%%%%%%%%%%%%%%%%%%%%%%%%%%%%%%%%%%%%%%%%%%%%%%%%%%%%%%%%%%
\begin{abstract}
Backstepping design for boundary linear PDE is formulated as a convex optimization problem. Some classes of parabolic PDEs and a first-order hyperbolic PDE are studied, with particular attention to non-strict feedback structures. Based on the compactness of the Volterra and Fredholm-type operators involved, their Kernels are approximated via polynomial functions. The resulting Kernel-PDEs are optimized using Sum-of-Squares (SOS) decomposition and solved via semidefinite programming, with sufficient precision to guarantee the stability of the system in the $\mathcal{L}^2$-norm. The effectiveness and limitations of the approach proposed are illustrated by numerical solutions of some Kernel-PDEs.
\end{abstract}

%%%%%%%%%%%%%%%%%%%%%%%%%%%%%%%%%%%%%%%%%%%%%%%%%%%%%%%%%%%%%%%%%%%%%%%%%%%%%%%%
\vspace{-0.2cm}

\section{INTRODUCTION}
In the field of Distributed Parameter Systems (DPSs), continuous-time Backstepping for linear Partial Differential Equations (PDEs) is a well-established  methodology in boundary control/observer design \cite{KS-2008,SK-2010,VK-2007,L-2009,M-2010}. Its fundamental idea, the \emph{Volterra transformation} \cite{RN-1990,Z-2012}, traces back to the application of the method of Integral Operators for solving initial-boundary problems \cite{C-1977} derived from the boundary control of parabolic equations \cite{S-1984}. It stands out for its elegant and simple systematic methodology, which: (i) does not involve spatial discretization of the PDE model (see \cite{R-1981} for fundamental disadvantages of early lumping), (ii) carries out a collective treatment of the system modes instead of a finite analysis of them based on their spectral characteristics (see \cite{C-2001} and references therein), (iii) does not require to formulate the problem in abstract Hilbert spaces, apply semigroup theory, nor solve operator-valued equations (see \cite{CZ-1995,BPDM-2007,TW-2009} for extension of classical control theory to infinite-dimensional systems).

Backstepping design for PDEs involves two main problems: (i) the solution/well-posedness of the so-called Kernel-PDE and (ii) the invertibility of the integral transformation. This methodology has mostly been applied to systems known as \quotes{\emph{strict-feedback}} systems on the basis of a Volterra-type transformation, invertibility of which is a well-know property \cite{L-2009, L-2003, GGK-2003}. It exploits the causal structure (causal in space \cite{C-2002}) leading to a kind of Kernel-PDE which is simple to solve in comparison with the operator Riccati equation derived from the linear quadratic regulator (LQR) approach \cite{CZ-1995,BDM-1976}. For some classes of these systems, the resulting Kernel-PDE can be reduced to a standard form, which allows obtaining a closed-form solution \cite{SK-2004,KS-2009}. For general cases
%(e.g., spatially and temporally variant parameters)
a closed-form analytic solution is hard to find and simple numeric methods cannot be applied directly \cite{M-2010}.

A common methodology used to solve the Kernel-PDEs (as well as to prove their well-posedness), consists in transforming these differential formulations into integral equations to be solved via the \emph{Successive Approximation} method. This way of solution has the objective to find a closed-form or provide a recursive computation of the integral Kernels \cite{KS-2008}-\cite{M-2010}, \cite{RN-1990,Z-2012,SK-2004,KS-2009,J-1999}. This kind of analysis is framed in the context of the Banach's contraction principle \cite{KK-2001}, tools also typically used to prove existence and uniqueness \cite{Z-2012,J-1999,K-2014,CMM-2006}. Since, for strict-feedback systems, this type of analysis has provided a useful and simple numerical tool, somewhat reduced research efforts have been devoted to solve the Kernel-PDE in alternative ways.

Recently, Backstepping for PDEs has been implemented on systems with \quotes{\emph{non-strict}} feedback structure on the basis of a \emph{Fredholm-type transformation}, for parabolic \cite{GXZ-2012,TBK-2014} as well as  hyperbolic PDEs \cite{BK-2014,BK-2015}. These kind of systems arise from multiple sources. For instance, naturally, in dynamics with non-local terms involving the whole spatial domain, in PDE models \cite{GXZ-2012} or in finite-dimensional systems with distributed delays \cite{BK-2011}. Additionally, in design-oriented problems such as: control of coupled PDE-ODE (Ordinary Differential Equations) systems by under-actuated schemes \cite{BK-2014,BK-2015} (fewer actuators than spatial states \cite{V-2006}; it avoids an additional control action to cancel the non-strict feedback term \cite{KS-2009}), or observer design for systems the output of which (sensing) comprises the states on the whole domain \cite{TH-2015}. In these cases, a Volterra-type transformation cannot be used (at least directly) and the application of a Fredholm-type transformation leads to new and intricate mathematical problems (operator invertibility, Kernel solvability) \cite{V-2006}. For instance, from the application of the concepts of fixed point theory (Picard sequence of successive approximations \cite{AAK-2014}) arises some system parameters constraints to guaranty the uniqueness of the Kernel-PDE solution, and thus the convergence of an approximate solution and the invertibility of a Fredholm-type transformation \cite{BK-2014,BK-2015}. This is due to the necessary condition of contraction of the resulting operator (Kernels with small spectral radius), one of the main drawbacks of this methodology of analysis for addressing general cases \cite{K-2014}. On the contrary, if a particular Kernel structure is proposed, such as partially separable Kernels, a simplified analysis can be carried out based on the the method of separation of variables \cite{TBK-2014}. However, under this approach the invertibility of the integral transformation and the solvability of the resulting Kernel-PDE are limited to a specific class of coefficients of the system.

In this article a novel methodology to solve approximately the Kernel-PDEs for both Volterra and Fredholm-type operators is presented. The proposed methodology recasts the Kernel-PDE as a convex optimization problem which: (i) obtains approximate Kernel solutions with sufficient precision to guarantee the stability of the closed-loop system and (ii) is not subject to the spectral characteristics of the resulting approximate operators. Assuming the well-posedness of the Kernel-PDEs, the main objective of the proposed approach is to determine Kernels to guarantee the stability of the system, which allows relaxing the exact zero matching condition on the differential boundary problem. Polynomial Kernels are proposed as approximate solution of the resulting Kernel-PDEs and the minimization of the residual functions is addressed by means of polynomial optimization tools. In particular, a Sum-of-Squares (SOS) decomposition problem is formulated -- equivalent to a convex optimization problem -- readily implementable resorting to semidefinite programming tools.

The paper is organized as follows. In Section~\ref{Background} the essential background, definitions and technical results are briefly introduced. In Section~\ref{Volterra} the problem of stabilization of parabolic PDE via the Volterra-type transformation is presented. In Section~\ref{Fredholm} the stabilization of hyperbolic PDE via the Fredholm-type transformation is analysed. In Section~\ref{Results} numerical results for specific examples related to Sections \ref{Volterra} and \ref{Fredholm} are presented. Concluding remarks are given in Section~\ref{Conclu}.

\vspace{-0.2cm}

%%%%%%%%%%%%%%%%%%%%%%%%%%%%%%%%%%%%%%%%%%%%%%%%%%%%%%%%%%%%%%%%%%%%%%%%%%%%%%%%
\section{Preliminaries}
\label{Background}

\vspace{-0.05cm}

\subsection{Notation}
$\mathcal{A}(\Omega)$, $\mathcal{C}^r(\Omega)$ and $\mathcal{L}^2(\Omega)$ stand for the space of real-analytic functions, continuous functions with continuous first $r$ derivatives and square integrable functions on the domain $\Omega$, respectively. $\mathbb{I}$, $\mathbb{A}$, $\mathbb{V}$ and $\mathbb{F}$ denote the identity, integral, Volterra-type and Fredholm-type operator, respectively. $\mathbb{R}[\mathbf{x}]$ denotes the ring of real polynomials in $n$ variables $\mathbf{x}=[x_1,x_2,\ldots,x_n]^T$ and $\mathbb{P}[\mathbf{x}]=\{ p \in \mathbb{R}[\mathbf{x}]: p(x) \geq 0, \forall x \in \mathbb{R}^n \}$ stands for the set of non-negative real polynomials. The notation $\mathbb{R}_{n,r}[\mathbf{x}]$ and $\mathbb{P}_{n,r}[\mathbf{x}]$ explicitly
indicates polynomials in $n$ variables with degree at most $r$, whereas $\Sigma_s$ represents the subset of polynomials with Sum-of-Squares (SOS) decomposition. In particular $\mathbb{P}(\mathbf{K})$ represents the non-negative polynomials on the set $\mathbf{K}$.
$\Phi_r=[1,x_1,\ldots,x_n,x_1^2,x_1x_2,\ldots,x_n^r]^{\top}$ is the standard vector basis of $\mathbb{R}_{n,r}[\mathbf{x}]$. Polynomials are expressed by multi-index notation: $p(\mathbf{x})=\sum_{j=0}^{z(r)-1} p_j\mathbf{x}^{\overline{\alpha}_j}=\langle \overline{p}, \Phi_r \rangle$, where $r \in \mathbb{N}$ is the polynomial degree, $z(r)=\binom{n+r}{r}$ is the number of polynomial coefficients $\overline{p}=[p_0, p_1, \ldots, p_{z(r)-1}]^{\top} \in \mathbb{R}^{z(r)}$, $\mathbf{x}^{\overline{\alpha}_j}=x_1^{\alpha_j^1}\cdots x_n^{\alpha_j^n}$ represents the $j$-th monomial with powers $\overline{\alpha}_j=[\alpha_j^1,\ldots,\alpha_j^n]$ such that $|\overline{\alpha}_j|=\sum_{k=1}^{n} \alpha_j^k \leq r$, $\alpha_j^k \in \mathbb{N}$. An abstract form of this notation is given by $p(\mathbf{x})= \sum_{\bm{\alpha} \in \mathbb{N}_r^n} p_{\bm{\alpha}} \mathbf{x}^{\bm{\alpha}}$ with powers $\bm{\alpha} \in \mathbb{N}_r^n$, where $\mathbb{N}_r^n=\{\overline{\alpha}_j \in \mathbb{N}^n; |\overline{\alpha}_j|\leq r, \forall j=0,\ldots,z(r)-1\} =\{\overline{\alpha}_0,\ldots,\overline{\alpha}_{z(r)\!-\!1}\}$. $\mathbb{S}_{+}^m$ denotes the set of symmetric positive semidefinite matrices of dimension $m \times m$.

\vspace{-0.2cm}

%%%%%%%%%%%%%%%%%%%%%%%%%%%%%%%%%%%%%%%%%%%%%%%%%%%%%%%%%%%%%%%%%%%%%%%%%%%%%%%%
\subsection{Integral Compact Operators}
Linear differential equations, ODEs and PDEs (boundary-value or initial-value problems), can be transformed into linear integral equations, the operators of which are frequently bounded or compact (completely continuous) \cite{K-2014,HS-1978,RY-2008}. In fact, every linear integral operator $\mathbb{A}: \mathcal{X} \rightarrow \mathcal{X}$:
\begin{align}
\label{int_op}
\mathbb{A}[u(\cdot)](\mathbf{x}) &:= \int_{\Omega \subset \mathbb{R}^n} K(\mathbf{x},\mathbf{y})u(\mathbf{y})d\mathbf{y}
\end{align}
with continuous Kernel or weakly singular Kernel $K$, is compact on the Banach space of continuous functions ($\mathcal{X}=(\mathcal{C}(\Omega;\mathbb{R}),\| \cdot \|_{\infty})$) and on the Hilbert space of square integrable functions ($\mathcal{X}=(\mathcal{L}^2(\Omega;\mathbb{R}), \langle \cdot, \cdot \rangle_{\mathcal{L}^2})$). Likewise, for square integrable Kernels, $\mathbb{A}$ is compact on this Hilbert space \cite{K-2014,ABK-2014}. This is the case for the integral operators derived from the Kernel-PDEs in the Backstepping PDE design, as pointed out in \cite{SK-2010} (page 19, footnote 2), where the Kernel is bounded and twice continuously differentiable.

In infinite-dimensional spaces bijectivity is a sufficient and necessary condition for (bounded) invertibility of bounded linear operators \cite{GGK-2003,RY-2008}. For linear equations of second kind; namely
\begin{align}
\label{ie_2kind}
(\mathbb{I}-\mathbb{A})[u(\cdot)](\mathbf{x})=w(\mathbf{x}),
\end{align}
two approaches are commonly carried out to determine whether there exists a bounded inverse. The first method is framed in the context of the \quotes{Banach contraction principle} \cite{KK-2001,AAK-2014}, based on the so-called Neumann series, for bounded operator with small spectral radius ($\|A\|<1$) \cite{K-2014}. This is the essential tool in the standard Backstepping PDE methodology, which relies on the inherent \emph{contraction} property of the Volterra operator \cite{KF-1957}, guaranteeing the uniform convergence of the Successive Approximation method \cite{RN-1990,Z-2012,J-1999,K-2014}. The second method is a re-statement of the celebrated Fredholm Alternative theorem \cite{K-2014,RY-2008}, based on the compactness property of $\mathbb{A}$, which is the core of the proposed approach. In this case the existence of a unique trivial solution $u=0$ of the homogeneous equation $u-\mathbb{A}u=0$ implies invertibility and thus the uniqueness of solutions.

Compact operators resemble the behaviour of operators in finite-dimensional spaces. In most of the traditional Banach spaces and for all Hilbert spaces, every compact operator is a limit of finite rank operators \cite{ABK-2014}. For continuous Kernels, a simple option for establishing this sequence are polynomials, which are a particular class of degenerate Kernels \cite{K-2014}.For instance if $K_N:=\sum_{j=0}^{N} k_j x^{\alpha_j} y^{\beta_j}$ then $\mathbb{A}_{K_N} [u(\cdot)](x)=\int_\Omega K_N(x,y) u(y) dy$ is a polynomial in the span of $\{x^{\alpha_j}\}_{j=0}^{N}$ so that $\mathbb{A}_{K_N}$ is a finite rank operator and hence compact \cite{ABK-2014}. Moreover, since $K \in \mathcal{C}(\Omega \times \Omega)$ ($\Omega=[0,1]$), based on the Weierstrass approximation theorem \cite{AH-2009}, there exists a sequence of polynomials $\{K_N\}_{N=0}^{\infty}$ such that:
\begin{flalign}
\label{k_pol}
\| \mathbb{A}_K \!-\! \mathbb{A}_{K_N} \| &=\sup_{\|u\|\leq 1}\|\mathbb{A}_{K-K_N} u\| \leq  \| K \!-\! K_N \|_\infty \underset{N \rightarrow \infty}{\rightarrow \quad 0}
\end{flalign}
with $k_j \in \mathbb{R}$, $\alpha_j \in \mathbb{N}$ and $\beta_j \in \mathbb{N}$ for $j=0,\ldots,N$. Equivalent results can be obtained for square integrable Kernels \cite{RN-1990,KS-2013}.

\vspace{-0.1cm}

%%%%%%%%%%%%%%%%%%%%%%%%%%%%%%%%%%%%%%%%%%%%%%%%%%%%%%%%%%%%%%%%%%%%%%%%%%%%%%%%
\subsection{Polynomial Optimization: Sum-of-Squares}
In general, the global polynomial optimization problem:
\begin{flalign}
\label{pol_prob}
\mathbf{P}: \left\{ \ p^*= \inf_{\mathbf{x} \in \mathbf{K}} p(\mathbf{x}) \right. \Leftrightarrow \left\{ \begin{array}{rl}
p^*=& \ \sup \ \gamma \\ \text{subj. to:} & p(\mathbf{x})-\gamma\geq 0 \\
& \mathbf{x} \in \mathbf{K} \end{array}, \right.
\end{flalign}
where $\mathbf{K}:=\{\mathbf{x} \in \mathbb{R}^n; g_j(\mathbf{x}) \geq 0, j=1,\ldots,m\}$, $g_j \in \mathbb{R}[\mathbf{x}]$ and $p \in \mathbb{R}[\mathbf{x}]$, is NP-hard\footnote{The right hand side of \eqref{pol_prob} sets forth the dual formulation of $\mathbf{P}$, which cannot be solved in polynomial time for quartic or higher degree polynomials \cite{BPT-2013}. However, its non-negative constraints can be approximated, amongst others \cite{KP-2015}, via SOS, providing a convex formulation with computational tractable solution via semidefinite programming (interior point method, small-medium size problems).}. However, the problem  $\mathbf{P}$ can be efficiently approximated by a hierarchy of convex (semidefinite) relaxations $(\mathbf{P}_d; p^*_d)$, with $p^*_d$ global optimum for $\left(p(\mathbf{x})-\gamma \right) \in \mathbb{P}_{n,2d}[\mathbf{x}]$, using SOS representations for  non-negative
polynomials \cite{P-2000,P-2003, BPT-2013} or the theory of Moments \cite{L-2001,L-2010,L-2015}.

\begin{theorem}(\cite{BPT-2013,L-2010})
\label{SOS_theo}
Let $\Phi_r$ be the standard vector basis of $\mathbb{R}[\mathbf{x}]$ with $z(r)=\binom{n+r}{r}$ monomials in $\mathbf{x}$ with degree $\leq r$. A multivariate polynomial $p \in \mathbb{R}[\mathbf{x}]$ is SOS ($p \in \Sigma_s \subset \mathbb{P}_{n,2d}[\mathbf{x}])$ if and only if there exists a matrix $Q \in \mathbb{S}_+^{z(d)}$ satisfying $p(\mathbf{x})=\Phi_d^T Q \Phi_d$, $r=2d$.
\end{theorem}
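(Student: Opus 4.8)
The plan is to prove the classical Gram-matrix characterization of SOS polynomials, establishing both implications of the "if and only if."
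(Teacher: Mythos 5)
Your proposal is not a proof; it is only an announcement of intent. Saying you will ``prove the classical Gram-matrix characterization, establishing both implications'' names the theorem but supplies none of the mathematics, so there is nothing that could be checked or compiled into an argument. (Note also that the paper itself offers no proof of this statement --- it is quoted from \cite{BPT-2013,L-2010} --- so the entire burden of the argument falls on you.)

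Concretely, three ingredients are missing. First, the direction ``SOS $\Rightarrow$ Gram matrix'': if $p=\sum_i q_i^2$, you must write each $q_i$ as $q_i=\mathbf{c}_i^{\top}\Phi_d$ for some coefficient vector $\mathbf{c}_i\in\mathbb{R}^{z(d)}$ and observe that
\begin{equation*}
p(\mathbf{x})=\sum_i \Phi_d^{\top}\mathbf{c}_i\mathbf{c}_i^{\top}\Phi_d=\Phi_d^{\top}\Big(\sum_i \mathbf{c}_i\mathbf{c}_i^{\top}\Big)\Phi_d ,
\end{equation*}
where $Q=\sum_i \mathbf{c}_i\mathbf{c}_i^{\top}$ is positive semidefinite as a sum of rank-one positive semidefinite matrices. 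Second, the converse: given $Q\in\mathbb{S}_+^{z(d)}$ with $p=\Phi_d^{\top}Q\Phi_d$, you must factor $Q=L^{\top}L$ (Cholesky or spectral square root) and read off $p=\|L\Phi_d\|^2=\sum_i\big((L\Phi_d)_i\big)^2$, a sum of squares of polynomials. Third, a degree bookkeeping step that is easy to overlook: in the forward direction you need that each $q_i$ in a decomposition of $p\in\mathbb{P}_{n,2d}[\mathbf{x}]$ has degree at most $d$, so that $\Phi_d$ suffices as the monomial basis; this follows because the top-degree homogeneous parts of the $q_i$ cannot cancel in $\sum_i q_i^2$ (their squares are nonnegative and sum to the top-degree part of $p$). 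Without these three steps written out, the ``if and only if'' has not been established in either direction.
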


For $\mathbf{K}$ \emph{compact basic semi-algebraic set}, the so-called \emph{Positivstellensatz} of Schm\"{u}dgen \cite{S-1991} and Putinar \cite{P-1993} allows formulating the hierarchy of semidefinite relaxations of \eqref{pol_prob} as:

\noindent
\begin{flalign}
\label{sos_relax}
\Scale[0.97]{
\mathbf{P}_d: \left\{ \! \! \!
\begin{array}{rl}
p^*_d = & \! \! \! \underset{Q_k} {\text{sup}} \  \gamma \\[-0.3cm]
\text{subj. to}: &\! \! \! \Big(p(\mathbf{x})\!-\!\gamma\!-\!\displaystyle\sum_{k=1}^{N} \overbrace{\Phi_{i_k}^{\top}(\mathbf{x}) Q_k \Phi_{i_k}(\mathbf{x})}^{s_k(\mathbf{x})} H_k(\mathbf{x}) \Big) \in \Sigma_s \\
  & Q_k \in \mathbb{S}_+^{z(i_k)}, \ i_k=d-d_k,  \ \forall k=1,\ldots,N,
  \end{array}
  \right.
}
\end{flalign}

\noindent
based on the existence of $s_k \in \Sigma_s$, where\footnote{The function $\lceil a \rceil$, commonly referred to as the \emph{ceil} function, rounds to the nearest integer greater than or equal to $a$.} $d_k=\lceil \text{deg}(H_k)/2 \rceil$ and $\max_k(\text{deg}(p),\text{deg}(H_k)) \leq 2d$ ; $N=2^m$ and $H_k=\prod_{k \in J} g_k$ for $J\subseteq\{1,\ldots,m\}$ if Schm\"udgen's Positivstellensatz is considered; $N=m$ and $H_k=g_k$ for Putinar's Positivstellensatz\footnote{Putinar's refinement requires that the quadratic module generated by $g_1,\ldots, g_m$ be Archimedean, which is not very restrictive \cite{L-2015}.}. Moreover, the approximate optimal solutions $p^*_k$ form a monotone nondecreasing sequence $(p^*_d \leq p^*_{d+1})$ such that $ p^*_d \rightarrow p^*$ as $d \rightarrow \infty$ \cite{ML-2010}.

\vspace{0.2cm}

From the dual viewpoint, in accordance with the Riesz linear functional $L_{\mathbf{y}} : \mathbb{P}(\mathbf{K}) \rightarrow \mathbb{R}$, $p  \mapsto L_{\mathbf{y}}(p)=\int_{\mathbf{K}} p(\mathbf{x}) d\mu(\mathbf{x})=\overline{p}^{\top} \mathbf{y}$ with $\mathbf{K}$ compact \cite{L-2010}, the proper cone of non-negative polynomials on $\mathbf{K}$ corresponds to the proper cone of \emph{moment sequences} $\mathbf{y}=(m_{\bm{\alpha}})_{\bm{\alpha} \in \mathbb{N}^n}$ with representing finite Borel measure $\mu$: $\mathbb{P}^*(\mathbf{K})=\{(m_{\bm{\alpha}}) \in \mathbb{R}^{\mathbb{N}^n}: \exists \ \mu \in \mathcal{M}(\mathbf{K})_+; \ m_{\bm{\alpha}} = \int_{\mathbf{K}} \mathbf{x}^{\bm{\alpha}} d\mu(\mathbf{x}), \ \forall  \bm{\alpha} \in \mathbb{N}^n \}$ \cite{ML-2010,L-2015}. For the full \textbf{K}-Moment problem a \quotes{practical}\footnote{General results such as the Riesz-Markov's \cite{RF-2010} and Riesz-Haviland's theorems \cite{L-2010}, do not have a \quotes{computationally tractable} characterization of the convex cone $\mathbb{P}^*(\mathbf{K})$.} necessary and sufficient condition for the existence of $\mu$ can be stated if a SOS representation on $\mathbf{K}$ compact basic semi-algebraic set is considered: $\exists \ \mu \in \mathcal{M}(\mathbf{K})_+ \Leftrightarrow \mathbf{M}_r(H_k \mathbf{y}) \succeq 0 \Leftrightarrow L_{\mathbf{y}}(H_k q^2) \geq 0 $, $\forall \ k=1,\ldots,N, \ \forall \ r \in \mathbb{N}$, with\footnote{For Schm\"udgen's Positivstellensatz: $N=2^m$ and $H_k=\prod_{k \in J} g_k$ for $J\subseteq\{1,\ldots,m\}$. For Putinar's Positivstellensatz: $N=m$ and $H_k=g_k$.}

\noindent
\begin{flalign}
%\label{mom_mat}
& \Psi (H_k \mathbf{x})= \! \!\left[\sum_{\bm{\gamma}\in \mathbb{N}_t^n} (h_k)_{\bm{\gamma}} \mathbf{x}^{\bm{\gamma} + \overline{\alpha}_{0}}, \ldots, \sum_{\bm{\gamma} \in \mathbb{N}_t^n } (h_k)_{\bm{\gamma}} \mathbf{x}^{\bm{\gamma} + \overline{\alpha}_{z(r)-1}} \right]^{\top}, \nonumber
\end{flalign}
$q \in \mathbb{R}_{n,r}(\mathbf{K})$ and $H_k=\sum_{\bm{\gamma}\in \mathbb{N}_t^n} (h_k)_{\bm{\gamma}} \mathbf{x}^{\bm{\gamma}}$, $(h_k)_{\bm{\gamma}} \in \mathbb{R}$, where $\mathbf{M}_r(\mathbf{y})=L_{\mathbf{y}}(\phi_r(\mathbf{x}) \phi_r^{\top}(\mathbf{x}))$ is denominated \emph{Moment matrix} and $\mathbf{M}_r(H_k \mathbf{y})$ \emph{Localizing matrix}\footnote{Curto and Fialknow denominated the moment matrix of a shifted vector as \emph{localizing matrix} \cite{ML-2005}, which can be interpreted as the action to \emph{localize} the support of a representing measure of $\mathbf{y}$ \cite{ML-2010}.}.
Thus the tractable optimization problem \eqref{sos_relax} is a dual equivalent formulation of:
\begin{flalign}
\label{mom_relax}
\mathbf{D}_d: \left\{
\! \! \! \begin{array}{rl}
\rho_{d}^{*}= & \inf_{\mathbf{y}} L_{\mathbf{y}}(p)= \sum_{\bm{\alpha} \in \mathbb{N}_d^n} p_{\bm{\alpha}} m_{\bm{\alpha}} \\
\text{subj. to}:  & m_{\mathbf{0}}=1, \ \mathbf{M}_d(\mathbf{y}) \succeq 0 \\
 &  \mathbf{M}_{d-v_j}(g_j \mathbf{y}) \succeq 0, \ \forall j=1,\ldots,m,
 \end{array}
 \right.
\end{flalign}
where $\mathbf{y}$ is a moment \quotes{finite} sequence ($\bm{\alpha} \in \mathbb{N}_d^n$), $\mathbf{M}_d(\mathbf{y})$ is a Moment matrix, $\mathbf{M}_{d-v_j}(g_j \mathbf{y})$ is a Localizing matrix, $v_j\!:=\! \lceil \text{deg}(g_j)/2 \rceil$ and $d \geq \max \{ \lceil r/2 \rceil, \max_j \{v_j\} \}$.

%%%%%%%%%%%%%%%%%%%%%%%%%%%%%%%%%%%%%%%%%%%%%%%%%%%%%%%%%%%%%%%%%%%%%%%%%%%%%%%%
\subsection{Convex Formulation of Differential BVPs}
Let $u=u(\mathbf{x}) \in \mathcal{C}^{w}(\Omega)$ be the solution of a linear PDE-Boundary Value Problem (BVP):

\noindent
\begin{flalign}
\label{PDE}
\mathbf{P}: \left\{ \begin{array}{rl}
\mathbb{L}[u(\cdot)](\mathbf{x}) &= f(\mathbf{x}), \  \forall \mathbf{x} \in \mathring{\Omega},\\
\mathbb{B}_i[u(\cdot)](\mathbf{x})|_{\mathbf{x} \in \partial\Omega_i} &= \mathbf{u}_i , \quad i=1,\ldots,r \in \mathbb{N}, \end{array} \right.
\end{flalign}

\noindent
in a compact domain $\Omega \subset \mathbb{R}^n$, where $\mathbb{L}$ and $\mathbb{B}_i$ are linear differential operators with polynomials terms, $f \in \mathbb{R}[\mathbf{x}]$, $\partial\Omega \supseteq \bigcup_{i=1}^r \partial\Omega_i$ represents the boundary of $\Omega=\mathring{\Omega} \ \cup \ \partial\Omega $, with $\mathring{\Omega}$ the interior of $\Omega$, and $\mathbf{u}_i \in \mathbb{R}$ is the value of $\mathbb{B}_i[u]$ on the boundary $\partial\Omega_i$. Based on the Weierstrass approximation theorem \cite{FS-2006}, the solution $u$ can be uniformly approximated by polynomials with theoretical arbitrary precision.

\noindent
Let

\noindent
\begin{flalign}
\label{Res_Fun}
\delta_d(\mathbf{x})=& \left(\sum_{\bm{\alpha} \in \mathbb{N}_d^n} p_{\bm{\alpha}} \mathbb{L}(\mathbf{x}^{\bm{\alpha}}) -  f(\mathbf{x})\right), \quad \forall \ \mathbf{x} \in \mathring{\Omega}
\end{flalign}

\noindent
be the \emph{residual function} due to the polynomial approximation $u(\mathbf{x})\approx \textsf{p}_d(\mathbf{x})=\sum_{\bm{\alpha} \in \mathbb{N}_d^n} p_{\bm{\alpha}} \mathbf{x}^{\bm{\alpha}} \in \mathbb{R}[\mathbf{x}]$ in \eqref{PDE}, of degree $d$ in $n$ variables $\mathbf{x}=[x_1,\ldots,x_n]^{\top} \in \mathbb{R}^{n}$ and coefficients $\bm{p}=[p_{\alpha_0},\ldots,$ $p_{\alpha_{z(n,d)-1}}]^{\top} \in \mathbb{R}^{z(n,d)}$.

\vspace{0.2cm}

\noindent
The main idea to solve \eqref{PDE} as a \emph{Polynomial Optimization Problem} is based on a notable result from Real Algebraic Geometry: the \emph{Positivstellensatz} \cite{BPT-2013}. Peculiarly, this result of \emph{positivity certification} does not depend on the characteristics of the polynomials involved in the problem. On the contrary, this result only relies on the kind of \emph{algebraic representation} of the domain $\Omega$. Thus, if $\Omega$ can be described by

\noindent
\begin{align}
\label{Set_O}
\Omega =& \{ \mathbf{x} \in \mathbb{R}^n; g_1(\mathbf{x})\geq 0, \ldots, g_m(\mathbf{x})\geq 0\},
\end{align}

\noindent
with $m \in \mathbb{N}$ and $g_j \in \mathbb{R}[\mathbf{x}]$, $\forall \ j=1,\ldots,m$, and this description is a \emph{compact basic semi-algebraic set}, based on the representation theorems of Schm{\"u}dgen or Putinar, the non-negative function $h(\mathbf{x})=\pm \delta_d(\mathbf{x})\mp \gamma$, for some $\gamma \geq 0$, can be formulated as:

\noindent
\begin{align}
\label{Res_Fun_SOS}
\begin{split}
& \left(\pm \delta_d(\mathbf{x})\mp \gamma - \sum_{J \neq 0} s_J(\mathbf{x}) G_J(\mathbf{x})\right) \in \Sigma_s,  \quad s_J \in \Sigma_s,
\end{split}
\end{align}

\noindent
$\forall \ \mathbf{x} \in \Omega$, where $G_J$ denotes a particular combination of polynomial constraints $g_j$'s in accordance with the Schm{\"u}dgen or Putinar representation selected. Thus, \eqref{Res_Fun_SOS} is a SOS decomposition problem equivalent to a convex optimization problem, numerically implementable via semidefinite programming \cite{BPT-2013,L-2015}.

\begin{flushleft}
\emph{Minimax Approximation}
\end{flushleft}

\vspace{-0.1cm}

Due to the \emph{axiom of completeness} and its consequence on the generalized \emph{Min-Max theorem} \cite{Dzung-2006}, the residual function \eqref{Res_Fun} is bounded by $\underline{\delta} \leq \delta_d(\mathbf{x}) \leq \overline{\delta}$, where $\underline{\delta}$ is the \emph{minimum} and $\overline{\delta}$ is the \emph{maximum} of $\delta_d$ on $\Omega$ compact domain. In addition, the exact solution of \eqref{PDE}, i.e. $\delta_d=0$ in \eqref{Res_Fun}, can be approximated using the simple idea of imposing $\underline{\delta} \rightarrow 0$ and $\overline{\delta} \rightarrow 0$ ($\delta_d \rightarrow 0$ in $\Omega$) on the extreme values of $\delta_d$. Intuitively, to achieve this objective, the optimization problem $\min \left\{\max_{x \in \Omega}\{|\overline{\delta}|,|\underline{\delta}|\} < \gamma \right\}$ can be formulated. This can be seen as the standard \emph{Uniform Best approximation} approach used to approximate the \emph{zero function} by $\delta_d$ in terms of $\mathcal{L}^{\infty}$-norm (uniform error) in $\Omega$, a scheme also denominated \emph{Minimax approximation} \cite{Powell-1981}.

\vspace{0.1cm}

\noindent
Similarly, a \emph{Least Squares approximation}, namely  $\min_{p_{\bm{\alpha}}} \int_{\Omega} \delta_d^2(\mathbf{x}) d\mathbf{x}$, can be carried out based on polynomial matrix inequalities and the Schur's complement.

\vspace{-0.1cm}

%%%%%%%%%%%%%%%%%%%%%%%%%%%%%%%%%%%%%%%%%%%%%%%%%%%%%%%%%%%%%%%%%%%%%%%%%%%%%%%%
\subsection{Definitions and Technical Results}
For the representation of bivariate polynomial Kernels $P(x,y)=\sum_{k=0}^{z(d)-1}p_k x^{\alpha_k} y^{\beta_k}$ of degree $d \in \mathbb{N}$, coefficients $p_k \in \mathbb{R}$ and powers $\alpha_k \in \mathbb{N}$, $\beta_k \in \mathbb{N}$, the following standard basis of monomials is considered:
\begin{flalign}
\label{Def_pol}
%\begin{split}
& \Phi_d:=\{1,x,y,x^2,xy,y^2,x^3,x^2y,xy^2,y^3,\ldots \nonumber\\
&  \qquad \qquad \qquad \qquad \ldots,x^d,x^{d-1}y,\ldots,xy^{d-1},y^d\},
\end{flalign}
with $z(d)\!=\!(d\!+\!1)\!(d\!+\!2)/2$ terms ordered according to the monomials $\Phi_d(k)\!=\!x^{j-k} y^{k}$, $\forall \ k\!=\!0,\ldots,j$ and $j\!=\!0,\ldots,d$.

\begin{lemma}
\label{prop_K}
The Backstepping design for 1-dimensional PDEs with Volterra or Fredholm-type transformation involves the domains: $\Omega :=\{0\leq x \leq 1\}$, $\Omega_L :=\{0\leq x \leq 1, 0 \leq y \leq x\}$, $\Omega_U :=\{0\leq x \leq 1, x \leq y \leq 1\}$ \cite{KS-2008,BK-2015}, which can be formulated as:

\vspace{-0.1cm}

\noindent
\begin{flalign}
\label{Domains}
\Omega \!\equiv \!\{&x \in \mathbb{R}; g_1(x)\!=\!x(1-x)\!\geq\! 0\}, \nonumber \\
\Omega_L \!\equiv \! \{&(x,y) \in \mathbb{R}^2; g_1(x)\!\geq\! 0, g_2(x,y)\!=\!y(x-y)\!\geq\! 0\},\\
\Omega_U \!\equiv\! \{&(x,y) \in \mathbb{R}^2; g_1(x)\!\geq\! 0, g_3(x,y)\!=\!(y-x)(1-y)\!\geq 0\}. \nonumber
\end{flalign}

\vspace{-0.1cm}

\noindent
These domain representations are compact basic semi-algebraic sets and their associated quadratic modules are Archimedean.
\end{lemma}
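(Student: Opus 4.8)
The plan is to establish three things in turn: that each polynomial description in \eqref{Domains} coincides set-theoretically with the intended domain, that these descriptions are compact basic semi-algebraic sets, and finally that the generated quadratic modules are Archimedean, the last being the substantive point.

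First I would verify the set equalities by a sign analysis of the defining products. For $\Omega$, the inequality $g_1(x)=x(1-x)\ge 0$ holds exactly on $[0,1]$. For $\Omega_L$, once $0\le x\le 1$ is enforced by $g_1$, the product $g_2(x,y)=y(x-y)\ge 0$ forces either $0\le y\le x$ or $x\le y\le 0$, and the second alternative meets $\{x\ge 0\}$ only at the origin; hence the description reduces to $\{0\le x\le 1,\,0\le y\le x\}$. The case $\Omega_U$ is symmetric: $g_3(x,y)=(y-x)(1-y)\ge 0$ together with $0\le x\le 1$ yields $x\le y\le 1$, the branch $\{y\ge 1,\,y\le x\}$ collapsing to the single point $(1,1)$. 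Each set is \emph{basic semi-algebraic} by construction, being a finite conjunction of non-strict polynomial inequalities; each is closed, as a finite intersection of preimages of $[0,\infty)$ under continuous maps, and bounded, being contained in $[0,1]$ or $[0,1]^2$. By Heine--Borel all three are therefore compact.

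For the Archimedean property I would use the standard sufficient condition that the quadratic module $\mathcal{M}(g_{j_1},\ldots,g_{j_k})=\{\sigma_0+\sum_l \sigma_l\, g_{j_l}:\sigma_l\in\Sigma_s\}$ is Archimedean whenever $N-\sum_i x_i^2\in\mathcal{M}$ for some $N>0$. The elementary fact used throughout is that a globally nonnegative quadratic polynomial is a sum of squares, since its homogenised Gram matrix is positive semidefinite; thus any quadratic I can certify to be nonnegative serves as an admissible term $\sigma_0\in\Sigma_s$. The one-dimensional and lower-triangular cases then follow from the explicit identities $1-x^2=(1-x)^2+2\,g_1$ and $2-x^2-y^2=2(1-x)^2+(x-y)^2+4\,g_1+2\,g_2$, each of which displays $N-\sum_i x_i^2$ as a sum of squares plus nonnegative-constant multiples of the generators; hence $\mathcal{M}(g_1)$ and $\mathcal{M}(g_1,g_2)$ are Archimedean.

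The main obstacle is $\Omega_U$. Here a naive use of constant multipliers fails: because $g_1$ vanishes on the face $x=1$, the control of $y$ there must come entirely from $g_3$, and since $g_3|_{x=1}=-(1-y)^2\le 0$ an ill-chosen multiplier leaves the residual $\sigma_0$ indefinite. I would resolve this by retaining constant multipliers but tuning them, together with the additive constant, so that the leftover quadratic is positive definite with nonnegative minimum. Concretely,
\[
4-x^2-y^2=\big(2x^2-2xy+y^2-x-2y+4\big)+3\,g_1+2\,g_3,
\]
an identity confirmed by direct expansion; the bracketed quadratic has positive-definite leading form (its principal determinant is $2\cdot 1-(-1)^2=1>0$) and global minimum $3/4>0$, so it is nonnegative and hence lies in $\Sigma_s$. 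This exhibits $4-x^2-y^2\in\mathcal{M}(g_1,g_3)$ and establishes the Archimedean property for $\Omega_U$. Collecting the three certificates completes the proof; the only genuinely delicate point is the balancing of the constant multipliers and the constant $N$ in the $\Omega_U$ identity so that the residual quadratic remains positive definite.
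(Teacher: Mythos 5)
Your proof is correct and follows the same overall strategy as the paper: check the set equivalences by sign analysis, get compactness from closedness and boundedness, and establish the Archimedean property by exhibiting an explicit certificate $N-\sum_i x_i^2 = s_0+\sum_j s_j g_j$ with nonnegative constant multipliers $s_j$ and an SOS residual $s_0$. For $\Omega$ and $\Omega_L$ your certificates are in fact identical to the paper's: your residual $2(1-x)^2+(x-y)^2$ expands to exactly the paper's $s_0=3x^2-2xy-4x+y^2+2$ (with the same multipliers $4$ and $2$), the only cosmetic difference being that you write the sum of squares explicitly while the paper exhibits the PSD Gram matrix $Q_L$. The one genuine difference is $\Omega_U$: the paper takes $N_U=3$, multipliers $4$ and $2$, and residual $3x^2-2x-2y-2xy+y^2+3$ certified by the singular matrix $Q_U\succeq 0$, whereas you take $N=4$, multipliers $3$ and $2$, and residual $2x^2-2xy+y^2-x-2y+4$, certified not by a Gram matrix but by the general lemma that a globally nonnegative quadratic polynomial is SOS (your leading form is positive definite and the global minimum is $3/4>0$; I verified both the identity $4-x^2-y^2=(2x^2-2xy+y^2-x-2y+4)+3g_1+2g_3$ and the minimum). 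Both certificates are valid; the paper's is more directly machine-checkable since it displays the PSD matrix, while your route spares you the search for a Gram matrix at the cost of a slightly larger Archimedean constant, and your remark about why constant multipliers are delicate on the face $x=1$ (where $g_3|_{x=1}=-(1-y)^2\leq 0$) is a useful piece of intuition the paper does not spell out.
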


\vspace{-0.15cm}

\begin{proof}
\label{proof_compact}
By definition the domains representations in \eqref{Domains} are basic semi-algebraic sets \cite{BPT-2013}. The equivalence of the sets $\Omega$, $\Omega_L$ and $\Omega_U$  and their respective representations in \eqref{Domains} is immediate, following the feasible solution set of the inequalities involved. Moreover, $\Omega$ is a closed and bounded subset of  $\mathbb{R}$ as well as the sets $\Omega_L$ and $\Omega_U$ are in $\mathbb{R}^2$, and hence compact \cite{WAS-1981}. The Archimedean property is verified since the quadratic modules associated to these sets satisfy \cite{ML-2010,L-2015}:

\vspace{-0.1cm}

\noindent
\begin{align}
 & \Scale[0.90]{\exists \ N \in \mathbb{N} \ \text{s.t.:} \ N-\sum_{i=1}^{n} x_i^2 \in \left\{ s_0+\sum_{j=1}^m s_j g_j; \ \left(s_j\right)_{j=0}^{m} \in \Sigma_s \right\}}. \nonumber
\end{align}
%  & \mathbf{K}:=\{\mathbf{x} \in \mathbb{R}^n; g_j(\mathbf{x}) \geq 0, j=\overline{1,m}\} \nonumber \\
For instance, for $\Omega \equiv \{x \in \mathbb{R}, g_1(x)=x(1-x)\geq 0\}$, selecting $N=1$, $s_1=2 \geq 0$ and $s_0=(x-1)^2\geq 0$. For the lower triangular domain $\Omega_L \equiv \{(x,y) \in \mathbb{R}^2, g_1=x(1-x)\geq 0, g_2=y(x-y)\geq 0\}$, using:
\begin{align}
&N_L=2 \in \mathbb{N}, s_1=4 \geq 0, s_2=2 \geq 0, \nonumber \\
& s_0=3x^2-2xy-4x+y^2+2 = [1, x, y] Q_L [1, x, y]^T  \in \Sigma_s, \nonumber\\
&Q_L=
\Scale[0.90]{\left(\begin{array}{rrr} 2 & -2 & 0\\ -2 & 3 & -1 \\ 0 & -1 & 1
\end{array}\right)} \succeq 0 \Rightarrow \\ & \begin{array}{l} N_L-x^2-y^2=\\ s_0(x,y)+s_1g_1(x)+s_2g_2(x,y). \end{array} \nonumber
\end{align}
Likewise, in the case of the upper triangular domain $\Omega_U \equiv \{(x,y) \in \mathbb{R}^2, g_1(x)=x(1-x)\geq 0, g_3(x,y)=(y-x)(1-y)\geq0\}$, considering:
\begin{align}
&N_U=3 \in \mathbb{N}, s_1=4 \geq 0, s_2=2 \geq 0, \nonumber\\
&s_0=3x^2\!-\!2x\!-\!2y\!-\!2xy\!+\!y^2\!+\!3 = [1, x, y] Q_U [1, x, y]^T \in \Sigma_s, \nonumber\\
&Q_U=
\Scale[0.90]{\left(\begin{array}{rrr}
3 & -1 & -1\\ -1 & 3 & -1 \\ -1 & -1 & 1
\end{array}\right)} \succeq 0 \Rightarrow \\ & \begin{array}{l} N_U-x^2-y^2=\\ s_0(x,y)+s_1 g_1(x)+s_2 g_3(x,y).\end{array} \nonumber
\end{align}
\end{proof}

\vspace{-0.2cm}

%%%%%%%%%%%%%%%%%%%%%%%%%%%%%%%%%%%%%%%%%%%%%%%%%%%%%%%%%%%%%%%%%%%%%%%%%%%%%%%%
\section{Parabolic PDE and the Volterra Operator}
\label{Volterra}

\subsection{Problem Setting}
In this section a class of parabolic PDEs with strict-feedback structure and spatially varying reactivity is considered \cite{KS-2008,SK-2010}:

\noindent
\begin{align}
\label{SP_1}
\begin{split}
u_t(x,t)&=\epsilon u_{xx}(x,t)+\lambda(x)u(x,t)\\
%\label{SP_2}
u(0,t)&=0, \quad u(1,t)=U(t),
\end{split}
\end{align}

\vspace{-0.05cm}

\noindent
where $u(x,t)=u_0(x) \in \mathcal{C}(\Omega;\mathbb{R})$ is the initial condition. The objective is to find a control action $U=U(t)$ so that the origin of \eqref{SP_1} is finite-time stable in the topology of the $\mathcal{L}^2$-norm. For this class of systems the Backstepping PDE methodology proposes a Volterra-type transformation (here referred to as Volterra operator):

\noindent
\begin{align}
\label{Vol_Op}
\begin{split}
w(x,t)&=u(x,t)-\int_0^x K(x,y)u(y,t)dy\\[-0.1cm]
&=(\mathbb{I}-\mathbb{V}_K)[u(\cdot,t)](x)
\end{split}
\end{align}
where $\mathbb{I}$ is the identity operator and $\mathbb{V}_K$: $\mathcal{C}(\Omega;\mathbb{R})\rightarrow \mathcal{C}(\Omega;\mathbb{R})$, to transform the system \eqref{SP_1} into the target stable system:

\noindent
\begin{align}
\label{TP_1}
\begin{split}
w_t(x,t)&=\epsilon w_{xx}(x,t)-c(x) w(x,t),\\
%\label{TP_2}
w(0,t)&=0, \quad w(1,t)=0,
\end{split}
\end{align}

\noindent
where $c(x)/\epsilon > - \pi^2/4$, $\forall \ x \in \Omega$, with a boundary feedback control determined by $U(t)=\int_0^1 K(1,y)u(y,t)dy$. Following the standard Backstepping PDE design procedure detailed in \cite{KS-2008}, the transformed system \eqref{SP_1} takes the form:

\noindent
\begin{flalign}
\label{P1_T}
& w_{t}(x,t)- \epsilon w_{xx}(x,t)+c(x) w(x,t) = \underbrace{\epsilon K(x,0)}_{\delta_0(x)} u_x(0,t) \ + \nonumber \\[-0.3cm]
&\underbrace{\left((\lambda(x)+c(x)) + 2 \epsilon \frac{d}{dx}K(x,x) \right)}_{\delta_1(x)} u(x,t) \ + \\[-0.2cm]
&\int_0^x \!\!\!\underbrace{\left(\epsilon K_{xx}(x,y) \!-\! \epsilon K_{yy}(x,y)\!-\!(\lambda(y)\!+\!c(y)) K(x,y) \right)}_{\delta_2(x,y)} u(y,t) dy, \qquad \nonumber
\end{flalign}

\noindent
so that the target system \eqref{TP_1} is achievable if the continuous bounded Kernel $K=K(x,y)$ satisfies the so-called Kernel-PDE:

\noindent
\begin{flalign}
\label{P1_ker0}
\delta_2(x,y)&=0, \quad \delta_1(x)=0, \quad \delta_0(x) = 0,
\end{flalign}

\noindent
$\forall \ (x,y) \in \Omega_L$ where the $\delta_i$ are defined in \eqref{P1_T}. In this article $\delta_i$ are denominated as \quotes{residual functions}. This linear hyperbolic PDE (Klein-Gordon-type) is well-posed and, for constant reactivity terms $\lambda=\lambda_0$ and $c=c_0$, it can be solved in closed-form \cite{KS-2008,SK-2010,SK-2004}:

\noindent
\begin{flalign}
\label{ker_closed}
K^\star(x,y)=-\overline{\lambda} y I_{1}(\sqrt{\Theta})/\sqrt{\Theta},
\end{flalign}
in terms of the first-order modified Bessel function $I_1$ with $\overline{\lambda}=(\lambda_0+c_0)/\epsilon$ and $\Theta=\overline{\lambda}(x^2-y^2)$.

\nopagebreak

%%%%%%%%%%%%%%%%%%%%%%%%%%%%%%%%%%%%%%%%%%%%%%%%%%%%%%%%%%%%%%%%%%%%%%%%%%%%%%%%
\subsection{Kernel-PDE as a Convex Optimization Problem}
\begin{proposition}
\label{Prop_1}
Let

\noindent
\begin{flalign}
\label{Inv_Vol_Op}
u(x,t)&=w(x,t)+\int_0^x L(x,y)w(y,t)dy
\end{flalign}

\noindent
be the inverse transformation of \eqref{Vol_Op} \cite{KS-2008,L-2003} and:
\begin{flalign}
\label{L_decomp}
L(x,y)&=\breve{L}(x,y)-\sigma, \ \breve{L}(x,y)\geq 0, \ \sigma \geq 0
\end{flalign}

\noindent
a positive decomposition of $L$ in the triangular domain $\Omega_L$. Let $m_{0,0}$ be the 0-order moment of $\breve{L}$ in accordance with

\noindent
\begin{flalign}
\label{L_mom}
m_{i,j}&:=\int_0^1\int_0^x x^i y^j \breve{L}(x,y) dy dx.
\end{flalign}

\noindent
Let $c(x) \geq \underline{c} >  -\epsilon \pi^2/4$, $\forall \ x \in \Omega$,  $\overline{\delta_1}=\max_{x \in \Omega} |\delta_1(x)|$, $\overline{\delta_2}=\max_{(x,y) \in \Omega_L} |\delta_2(x,y)|$ and $\delta_0(x)=0$, the transformed system \eqref{P1_T} is exponentially stable in the $\mathcal{L}^2$-norm topology if the residual functions satisfy:

\noindent

\begin{align}
\label{Res_bound}
&\overline{\delta_1}+\overline{\delta_2} \leq \min \left\{\frac{\epsilon \theta (\frac{\pi^2}{4}+1)+(\underline{c}-\epsilon)}{(1+\sigma)},\frac{\epsilon(1-\theta)}{\left(m_{0,0}+\frac{\sigma+1/2}{4}\right)}\right\},
\end{align}

\noindent
for some scalar $0\leq \theta \leq 1$.

\end{proposition}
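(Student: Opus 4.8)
The plan is to certify exponential decay of the natural Lyapunov functional $V(t)=\tfrac12\int_0^1 w^2(x,t)\,dx=\tfrac12\|w(\cdot,t)\|_{\mathcal{L}^2}^2$ along the trajectories of the \emph{perturbed} transformed system \eqref{P1_T}, rather than of the idealized target \eqref{TP_1}. Since $\delta_0=0$ by hypothesis, differentiating $V$ along \eqref{P1_T} and integrating the diffusion term by parts gives, using the boundary conditions $w(0,t)=w(1,t)=0$ (both of which hold, as $w(0,t)=u(0,t)=0$ and the control $U(t)=\int_0^1 K(1,y)u\,dy$ forces $w(1,t)=0$ even for the approximate $K$),
\[
\dot V=-\epsilon\int_0^1 w_x^2\,dx-\int_0^1 c(x)w^2\,dx+\int_0^1 w\Big(\delta_1 u+\int_0^x\delta_2(x,y)u(y,t)\,dy\Big)dx.
\]
The first two terms are the nominal dissipation; the last is the perturbation $P$ generated by the non-zero residuals. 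I would bound the reactivity term by $-\int cw^2\le-\underline c\int w^2$ and reduce the whole argument to controlling $P$ in terms of $\int_0^1 w^2$ and the dissipation $\int_0^1 w_x^2$.

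The device that makes $P$ computable is the inverse transformation \eqref{Inv_Vol_Op}: writing $u=(\mathbb{I}+\mathbb{A}_L)w$ with $\mathbb{A}_L w(x)=\int_0^x L(x,y)w(y)\,dy$ expresses $P$ entirely in terms of $w$. Substituting the positive decomposition \eqref{L_decomp}, $L=\breve L-\sigma$, splits $\mathbb{A}_L=\mathbb{A}_{\breve L}-\sigma\mathbb{A}_1$, where $\mathbb{A}_1 w(x)=\int_0^x w(y)\,dy$. I would then estimate $|P|$ product by product, using $|\delta_1|\le\overline{\delta_1}$, $|\delta_2|\le\overline{\delta_2}$, the non-negativity $\breve L\ge0$, and two elementary inequalities that hold because $w(0)=0$: the Sobolev-type embedding $\|w\|_\infty^2\le\int_0^1 w_x^2\,dx$ (from $w(x)=\int_0^x w_y\,dy$ and Cauchy--Schwarz) and the one-sided Poincar\'e/Wirtinger inequality $\int_0^1 w_x^2\,dx\ge(\pi^2/4)\int_0^1 w^2\,dx$, the latter being exactly the constant appearing in the design condition $c/\epsilon>-\pi^2/4$. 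The non-negativity of $\breve L$ is what turns the $\mathbb{A}_{\breve L}$ contributions into the zeroth moment $m_{0,0}$ of \eqref{L_mom} (bounding $\int_0^x\breve L\,dy$ and the double integral of $\breve L$ against $\|w\|_\infty^2$), while the $\sigma\mathbb{A}_1$ contributions produce the scalar constants (via $\int_0^1 x\,dx=\tfrac12$ and similar) that assemble into the coefficient $m_{0,0}+(\sigma+\tfrac12)/4$; the genuinely $\mathcal{L}^2$ part collects into the coefficient $(1+\sigma)$.

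The last step is a budgeting argument. After the estimates, $\dot V$ is bounded above by a form $-A\int_0^1 w^2\,dx-B\int_0^1 w_x^2\,dx$, with $A$ and $B$ depending affinely on $\overline{\delta_1}+\overline{\delta_2}$. Introducing the free parameter $\theta\in[0,1]$ to split the dissipation $\epsilon\int w_x^2=\theta\epsilon\int w_x^2+(1-\theta)\epsilon\int w_x^2$, I would reserve the fraction $(1-\theta)\epsilon$ to dominate the $\|w\|_\infty^2$-controlled part of $P$ — which, via $\|w\|_\infty^2\le\int w_x^2$, yields the second entry of the minimum in \eqref{Res_bound} — and use the remaining $\theta\epsilon\int w_x^2$, together with the Poincar\'e inequality and $-\underline c\int w^2$, to dominate the $\mathcal{L}^2$-controlled part, which yields the first entry. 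Requiring both resulting coefficients to be non-negative forces $\overline{\delta_1}+\overline{\delta_2}$ below the minimum in \eqref{Res_bound}, whence $\dot V\le-\beta\|w\|_{\mathcal{L}^2}^2=-2\beta V$ for some $\beta>0$, so that $V(t)\le V(0)e^{-2\beta t}$, i.e.\ exponential stability in the $\mathcal{L}^2$-norm.

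I expect the main obstacle to be the perturbation estimate of the second paragraph: reproducing the exact coefficients $(1+\sigma)$ and $m_{0,0}+(\sigma+\tfrac12)/4$ — and in particular the precise numerator of the first bound in \eqref{Res_bound}, including the factor $(\pi^2/4+1)$ and the $-\epsilon$ offset — requires deciding, for each operator product arising from applying the residual operators $\delta_1(\cdot)$ and $\int_0^x\delta_2\,dy$ to $(\mathbb{I}+\mathbb{A}_{\breve L}-\sigma\mathbb{A}_1)w$, whether to measure $w$ in $\mathcal{L}^2$ or in $\mathcal{L}^\infty$, and then allocating the single dissipation reservoir $\epsilon\int w_x^2$ between the Poincar\'e inequality and the Sobolev embedding so that the two \emph{independent} bounds in \eqref{Res_bound} separate cleanly. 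Everything else — the integration by parts, the sign handling of the reactivity term, and the passage from $\dot V\le-2\beta V$ to exponential decay — is routine.
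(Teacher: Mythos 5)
Your proposal follows essentially the same route as the paper's proof: the Lyapunov functional $V=\tfrac12\|w\|^2$ differentiated along \eqref{P1_T}, substitution of the inverse transformation \eqref{Inv_Vol_Op} with the positive decomposition $L=\breve{L}-\sigma$ so that the perturbation terms produce the moment $m_{0,0}$, a $\theta$-split of the dissipation $\epsilon\|w_x\|^2$ into a Wirtinger-controlled $\mathcal{L}^2$ budget and a sup-norm budget, and the final requirement that the coefficients of the two groups be non-negative, which yields the two entries of the minimum in \eqref{Res_bound}. The only divergence is technical and harmless: where you invoke the embedding $\|w\|_\infty^2\le\|w_x\|^2$, the paper uses Agmon's plus Young's inequalities ($\overline{w}^2\le\|w\|^2+\|w_x\|^2$), which is precisely what generates the $(\pi^2/4+1)$ factor and the $-\epsilon$ offset in the first numerator (and it bounds the integral terms via the identity $\int_0^1 f\int_0^x f\,dy\,dx=\tfrac12\big(\int_0^1 f\big)^2$ together with the Gr\"uss inequality); your sharper estimates lead to a sufficient condition that is implied by \eqref{Res_bound}, so the stated proposition still follows.
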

\begin{proof}
\label{proof_lyap}
Let $V=\frac{1}{2}\int_0^1w^2(x,t)dx=(1/2)\|w(x)\|^2$ be a Lyapunov functional\footnote{For a clearer description, the time-dependence in the functions is dropped ($w(x)\equiv w(x,t)$). The norm is the usual in the space of square integrable functions on the domain $\Omega=[0,1]$ : $\|w(x)\|^2=\int_{\Omega} w^2(x) dx$.}. Its time-derivative $\dot{V}=\int_0^1 w(x) w_t(x)dx$ along the trajectory (\ref{P1_T}), with $\delta_0(x)=0$, is given by:
\begin{flalign}
\label{Lyap_der}
\dot V &= \underbrace{\epsilon \int_0^1 \! w(x)w_{xx}(x)dx}_{T_1} + \! \underbrace{\int_0^1 \! w(x) u(x)\delta_1(x) dx}_{T_2} \\
& -\int_0^1 \! c(x) w^2(x) dx  \! + \! \underbrace{\int_0^1 w(x) \int_0^x \delta_2(x,y) u(y) dy dx}_{T_3}.\nonumber
\end{flalign}

\noindent
With respect to the term $T_1$, using integration by parts, the boundary conditions in \eqref{TP_1}, and splitting the resulting expression by a factor $0 \leq \theta \leq 1$, yields:
\begin{align}
\label{Lyap_der_b1_1}
\mbox{\small $T_1$} = & w(x)w_{x}(x)\left|_0^1 \right. -\int_0^1 w^2_x(x) dx, \nonumber \\
\leq & \underbrace{-\epsilon \theta \|w_x(x)\|^2}_{T_{1a}} -\underbrace{\epsilon (1-\theta) \|w_x(x)\|^2}_{T_{1b}}.
\end{align}
Then, applying the Wirtinger's inequality on the term $T_{1a}$ and the Agmon's and Young's inequalities on the term $T_{1b}$ \cite{KS-2008,SK-2010,HLP-1952,B-1969,W-1994}, yields:
\begin{align}
\label{Lyap_der_b1}
\mbox{\small $T_1$} \leq & -\epsilon \theta \frac{\pi^2}{4} \|w(x)\|^2 + \epsilon (1-\theta) ( \|w(x)\|^2- \overline{w}^2),
\end{align}
where $\overline{w}^2=\max_{x \in \Omega} w^2(x)$, which leads to:
\begin{align}
\label{Lyap_der_b2}
     \dot V \leq &  -\left(\epsilon\theta\left(\frac{\pi^2}{2}\!+\!2\right)\!+\! 2(\underline{c}\!-\!\epsilon)\right)V(t)\!-\!\epsilon(1\!-\!\theta)\overline{w}^2 + \mbox{\small $T_2 + T_3 $}.
\end{align}

\noindent
with $\underline{c}=\min_{x \in \Omega} c(x)$.  As for the term $T_2$, substituting $u=u(x,t)$ from \eqref{Inv_Vol_Op}, the inverse Kernel $L$ from \eqref{L_decomp}, taking an upper bound by means of the maximum absolute value of some integrand functions and the maximum value of the residual functions, yields:
\begin{align}
%\label{Lyap_der_b3}
& \mbox{\small $T_2$} \!\! \leq \!\! \int_0^1 \!\!\! w^2(x) |\delta_1(x)| dx \!+ \!\! \int_0^1 \!\!\! |w(x)||\delta_1(x)|\int_0^x \!\!\! |L(x,y)||w(y)|dydx \nonumber\\
&\leq  \mbox{\small $\overline{\delta_1} \|w(x)\|^2 + \overline{\delta_1} \overline{w}^2 \int_0^x \breve{L}(x,y)dydx \! + \! \sigma \overline{\delta_1} \underbrace{\!\!\! \int_0^1 \!\!\! |w(x)| \int_0^x \!\!\! |w(y)| dy dx}_{T_{2a}}$}, \nonumber
\end{align}

\noindent
where $\overline{\delta_1}=\max_{x \in \Omega} |\delta_1(x)|$. Then, using the identity $\int_a^b f(x) \int_a^x f(y) dy dx = (1/2)(\int_a^b f(x)dx)^2$ for $f$ continuous function \cite{Fitz-2009} on the term $T_{2a}$, and the Gr\"{u}ss' Integral inequality \cite{D-1998} on its resulting term, leads to:
\begin{align}
\label{Lyap_der_b4}
\mbox{\small $T_2$} & \leq (2+\sigma)\overline{\delta_1}V(t) +  \overline{\delta_1}\left(\frac{\sigma}{8}+m_{0,0}\right)\overline{w}^2.
\end{align}
Changing the order of integration in the term $T_3$ and following a similar procedure as described above for the term $T_2$, yields:
\begin{align}
\label{Lyap_der_b5}
\mbox{\small $T_3$} &= \int_0^1 u(y) \int_y^1 w(x) \delta_2(x,y) dxdy \nonumber\\
& \leq \int_0^1|w(x)| \int_0^x|w(y)||\delta_2(x,y)|dy dx \nonumber \\
& +\int_0^1 \int_0^y\breve{L}(y,s)|w(s)|ds \int_y^1|w(x)||\delta_2(x,y)|dxdy \nonumber \\
& +\sigma \int_0^1\int_0^y |w(s)| ds \int_y^1 |w(x)||\delta_2(x,y)|dxdy \nonumber \\
& \leq (1+2\sigma) \overline{\delta_2}V(t) + \overline{\delta_2}\left(\frac{1+2\sigma}{8} + m_{0,0}\right)\overline{w}^2,
\end{align}
with $\overline{\delta_2}=\max_{(x,y) \in \Omega_L} |\delta_2(x,y)|$. Finally, using in \eqref{Lyap_der_b2} the upper bounds \eqref{Lyap_der_b4} and \eqref{Lyap_der_b5} for the terms $T_2$ and $T_3$, respectively, grouping terms with respect to $V=\|w(x)\|^2/2$ and $\overline{w}^2$, the condition $\dot{V} \leq 0 $ is satisfied if:

\begin{align}
\label{Lyap_der_b6}
& -(\overline{\delta_1}+\overline{\delta_2}) + \frac{\overline{\delta_1}\sigma+\overline{\delta_2}}{2(1+\sigma)} + \frac{\epsilon \theta (\pi^2/4+1)+(\underline{c}-\epsilon)}{(1+\sigma)} \geq 0,
\end{align}

\noindent
and

\noindent
\begin{align}
\label{Lyap_der_b7}
& -(\overline{\delta_1}+\overline{\delta_2}) + \frac{\overline{\delta_1}(1+\sigma)}{(8m_{0,0}+1+ 2\sigma)}+\frac{\epsilon(1-\theta)}{\left(m_{0,0}+\frac{\sigma+1/2}{4}\right)} \geq 0,
\end{align}

\noindent
which leads to the expression \eqref{Res_bound}, i.e. a sufficient condition for exponential stability of \eqref{P1_T} in the $\mathcal{L}^2$-norm topology.
\end{proof}

\vspace{0.1cm}

\noindent
Motivated by the result of Proposition \ref{Prop_1}, which sets forth a margin of clearance in the stability of this transformed system, a relaxation of the exact zero matching condition for the residual functions $\delta_1$ and $\delta_2$ can be considered. It allows formulating an approximate solution for the Kernel-PDE \eqref{P1_ker0}.

\begin{proposition}
\label{Prop_2} Let $N(x,y)=\sum_{k=0}^{z(d)-1} n_k x^{\alpha_k} y^{\beta_k}$ be a polynomial approximation of $K$ of arbitrary even degree $d > d_r=\max\{d_{\lambda},d_c\} \in \mathbb{N}$ in accordance with \eqref{Def_pol}. Let $\delta_1=\delta_1(x)$ and $\delta_2=\delta_2(x,y)$ be the resulting residual functions according to \eqref{P1_T} with polynomial degrees $d_1=d$ and $d_2=2\left\lceil\frac{d+d_r+2}{2}\right\rceil$, respectively; let $\underline{\rho}_1$, $\underline{\rho}_2$, $\overline{\rho}_1$, and $\overline{\rho}_2$ be lower and upper bounds of these functions in $\Omega_L$; $\gamma_j\geq 0, \ j=1,\ldots,4$. For reactivity terms $\lambda=\lambda(x)$ and $c=c(x)$ described by polynomial functions of degree $d_{\lambda}$ and $d_c$, respectively, the Kernel-PDE \eqref{P1_ker0} can be formulated as the convex optimization problem:

\vspace{-0.1cm}

\noindent
\begin{flalign}
\label{PDE_opt_1}
& \underset{\gamma_j, N, s_j}{\text{minimize:}} \qquad \gamma_1+ \gamma_2 + \gamma_3 + \gamma_4 \\
&\text{subject to:} \nonumber \\
\label{PDE_opt_4}
& \left(\delta_1(x)-\underline{\rho}_1-s_1(x) \ g_1(x)\right) \in \Sigma_s, \\
\label{PDE_opt_5}
& \left(\overline{\rho}_1-\delta_1(x)-s_2(x) \ g_1(x)\right) \in \Sigma_s, \\
\label{PDE_opt_6}
& \left(\delta_2(x,y)-\underline{\rho}_2-[s_3(x,y) \ s_4(x,y)] \ g_L(x,y) \right) \in \Sigma_s, \\
\label{PDE_opt_7}
& \left(\overline{\rho}_2-\delta_2(x,y)-[s_5(x,y) \ s_6(x,y)] \ g_L(x,y) \right) \in \Sigma_s, \\
\label{PDE_opt_8}
& s_j \in \Sigma_s, \forall \ j=1,\ldots,6, \\
\label{PDE_opt_9}
& \begin{bmatrix} \ \gamma_1 & \underline{\rho}_1 \\ \ \underline{\rho}_1 & \gamma_1 \end{bmatrix} \succeq 0, \quad  \begin{bmatrix} \ \gamma_2 & \overline{\rho}_1 \\ \ \overline{\rho}_1 & \gamma_2 \end{bmatrix} \succeq 0, \\
\label{PDE_opt_10}
& \begin{bmatrix} \ \gamma_3 & \underline{\rho}_2 \\ \ \underline{\rho}_2 & \gamma_3 \end{bmatrix} \succeq 0, \quad  \begin{bmatrix} \ \gamma_4 & \overline{\rho}_2 \\ \ \overline{\rho}_2 & \gamma_4 \end{bmatrix} \succeq 0, \\
\label{PDE_opt_11}
& \epsilon N_{xx}(x,y) \!-\! \epsilon N_{yy}(x,y)\!-\!(\lambda(y)\!+\!c(y)) N(x,y)\! =\! \delta_2(x,y)\\[-0.1cm]
\label{PDE_opt_12}
& (\lambda(x)+c(x)) + 2 \epsilon N_x(x,x)= \delta_1(x), \\
\label{PDE_opt_13}
& \delta_0(x)=N(x,0)=0,
\end{flalign}

\noindent
$\forall \ (x,y) \in \Omega_L$, for some polynomials $s_1,s_2$ of degree $d_1-2$, $s_3,s_4,s_5,s_6$ of degree $d_2-2$ and $g_L=[g_1, g_2]^{\top}$. The optimal minimal bounds for the residual functions are:  $\overline{\delta_1}=\max\{\gamma_1,\gamma_2\}$ and $\overline{\delta_2}=\max\{\gamma_3,\gamma_4\}$.
\end{proposition}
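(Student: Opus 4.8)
The plan is to verify the program \eqref{PDE_opt_1}--\eqref{PDE_opt_13} constraint-by-constraint as a faithful and \emph{convex} reformulation of the Kernel-PDE \eqref{P1_ker0}, and then to identify its optimal values with the residual bounds $\overline{\delta_1},\overline{\delta_2}$ that feed Proposition \ref{Prop_1}. First I would treat the equalities \eqref{PDE_opt_11}--\eqref{PDE_opt_13} as the \emph{definitions} of the residuals: substituting the polynomial ansatz $N$ for the kernel $K$ in the under-braced expressions $\delta_2,\delta_1,\delta_0$ of \eqref{P1_T} reproduces \eqref{PDE_opt_11}, \eqref{PDE_opt_12} and \eqref{PDE_opt_13}. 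Since differentiation and multiplication by the fixed coefficients $\lambda(\cdot)+c(\cdot)$ act linearly on the monomials of $N$, each $\delta_i$ is an \emph{affine} function of the coefficient vector $[n_k]$. A degree count then justifies the stated budgets: $N_x(x,x)$ has degree $d-1$ and $\lambda+c$ degree $d_r<d$, so $\delta_1$ fits in the even budget $d_1=d$, which also carries the multiplier product $s_1g_1$ of degree $(d_1-2)+2=d_1$; likewise $(\lambda(y)+c(y))N$ has degree $d+d_r$ while $N_{xx},N_{yy}$ have degree $d-2$, so $\delta_2$ fits in the smallest even budget $d_2=2\lceil(d+d_r+2)/2\rceil$. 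The boundary relation \eqref{PDE_opt_13}, $N(x,0)=0$, forces every pure-$x$ coefficient of $N$ to vanish and is thus a linear equality imposed exactly, consistently with the assumption $\delta_0\equiv0$ of Proposition \ref{Prop_1}.

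Next I would read the inequality constraints as certified bounds. By Lemma \ref{prop_K} the sets $\Omega=\{g_1\ge0\}$ and $\Omega_L=\{g_1\ge0,\,g_2\ge0\}$ are compact basic semi-algebraic with Archimedean quadratic modules, so Putinar's Positivstellensatz applies. The sufficiency direction is all the \emph{formulation} needs: from \eqref{PDE_opt_4}, $\delta_1-\underline{\rho}_1-s_1g_1\in\Sigma_s$ with $s_1\in\Sigma_s$ gives $\delta_1(x)\ge\underline{\rho}_1$ on $\Omega$; \eqref{PDE_opt_5} gives $\delta_1\le\overline{\rho}_1$; and \eqref{PDE_opt_6}--\eqref{PDE_opt_7}, with the pair $g_L=[g_1,g_2]^\top$, give $\underline{\rho}_2\le\delta_2\le\overline{\rho}_2$ on $\Omega_L$. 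Hence $\delta_1\in[\underline{\rho}_1,\overline{\rho}_1]$ and $\delta_2\in[\underline{\rho}_2,\overline{\rho}_2]$ pointwise, so $|\delta_i|\le\max(|\underline{\rho}_i|,|\overline{\rho}_i|)$. The $2\times2$ linear matrix inequalities \eqref{PDE_opt_9}--\eqref{PDE_opt_10} are exactly the PSD encodings of $\gamma\ge0,\ \gamma^2\ge\rho^2$, i.e. $\gamma_1\ge|\underline{\rho}_1|$, $\gamma_2\ge|\overline{\rho}_1|$, $\gamma_3\ge|\underline{\rho}_2|$, $\gamma_4\ge|\overline{\rho}_2|$, which combine into $\overline{\delta_1}\le\max(\gamma_1,\gamma_2)$ and $\overline{\delta_2}\le\max(\gamma_3,\gamma_4)$.

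I would then argue convexity and the claimed equalities. Taking the multiplier Gram matrices (Theorem \ref{SOS_theo}) and the scalars $\underline{\rho}_i,\overline{\rho}_i$ as decision variables alongside $[n_k]$ and $\gamma_j$, every item of the program is semidefinite-representable: the objective is linear; \eqref{PDE_opt_11}--\eqref{PDE_opt_13} are affine in $[n_k]$; and, crucially, because $g_1,g_2$ are \emph{fixed} polynomials the products $s_jg_j$ are linear in the coefficients of $s_j$, so each membership in $\Sigma_s$ is — by Theorem \ref{SOS_theo} — an LMI in the pooled variables, with no bilinear coupling between unknown kernel and unknown multipliers. The problem is therefore a single SDP, hence convex. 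For the equalities, minimizing $\sum_j\gamma_j$ drives the bounds to their tightest certifiable values: for fixed $N$ the optimizer pushes $\underline{\rho}_1\uparrow\min_\Omega\delta_1$ and $\overline{\rho}_1\downarrow\max_\Omega\delta_1$ (and analogously for $\delta_2$ on $\Omega_L$), and a short case check on the sign of $\delta_i$ gives $\max(\gamma_1,\gamma_2)=\max(-\min_\Omega\delta_1,\max_\Omega\delta_1)=\overline{\delta_1}$ and $\max(\gamma_3,\gamma_4)=\overline{\delta_2}$.

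The main obstacle is precisely this last \emph{tightness} claim: whereas the sufficiency direction ($\delta_i$ bounded $\Rightarrow$ the stability inequality of Proposition \ref{Prop_1}) is immediate, turning $\overline{\delta_i}\le\max\{\gamma\}$ into the stated equality requires the \emph{converse} of Putinar's representation to hold at the finite multiplier degrees $d_1-2$ and $d_2-2$ — i.e. that $\underline{\rho}_i,\overline{\rho}_i$ can actually \emph{attain} the extrema of $\delta_i$. For the univariate $\delta_1$ on $\Omega=[0,1]$ this is clean, since nonnegative univariate polynomials admit exact degree-bounded weighted-SOS representations; for the bivariate $\delta_2$ on the triangle $\Omega_L$ the relaxation may be inexact at a prescribed $d_2$, so the equality for $\overline{\delta_2}$ should be read as holding in the limit guaranteed by the Archimedean property of Lemma \ref{prop_K}, with $\overline{\delta_2}\le\max(\gamma_3,\gamma_4)$ remaining as the operative inequality feeding \eqref{Res_bound}.
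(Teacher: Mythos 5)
Your proposal is correct and follows essentially the same route as the paper's proof: the residuals are defined by the affine substitution \eqref{PDE_opt_11}--\eqref{PDE_opt_13}, Putinar's Positivstellensatz on the Archimedean domains of Lemma \ref{prop_K} certifies the bounds via the SOS constraints \eqref{PDE_opt_4}--\eqref{PDE_opt_8}, the LMIs \eqref{PDE_opt_9}--\eqref{PDE_opt_10} encode the absolute values, and the linear cost drives $\delta_1,\delta_2$ toward zero, all within a single SDP since the fixed multipliers $g_1,g_2$ create no bilinear coupling. Your closing observation --- that the claimed equality $\overline{\delta_i}=\max\{\gamma_j\}$ is only guaranteed as an inequality at the fixed multiplier degrees $d_1-2$, $d_2-2$, with exactness holding in the univariate case and otherwise only in the Putinar limit --- is a sharper and more honest reading of the tightness claim than the paper's own one-paragraph proof, which asserts it without qualification.
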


\vspace{-0.1cm}

\begin{proof}
Since $N$ is a polynomial approximation of $K$, as which is indicated above, the residual functions in \eqref{P1_T} have a polynomial structure determined by \eqref{PDE_opt_11}-\eqref{PDE_opt_13}. In addition, the quadratic module associated to the representation of the domains $\Omega$ and $\Omega_L$ are Archimedean (see Lemma \ref{prop_K}). Based on Putinar's Positivstellensatz \cite{ML-2010,L-2015}, via the SOS decomposition \eqref{PDE_opt_4}-\eqref{PDE_opt_8}, the unknown extreme values of $\delta_1$ and $\delta_2$  ($\underline{\rho}_1,\overline{\rho}_1,\underline{\rho}_2,\overline{\rho}_2$) can be determined via a polynomial optimization problem, which is convex in terms of polynomial coefficients and solved via semidefinite programming \cite{BPT-2013}. The absolute value of these upper and lower bounds are given by means of \eqref{PDE_opt_9}-\eqref{PDE_opt_10}, so that the linear cost function \eqref{PDE_opt_1} yields $\delta_1 \rightarrow 0$, $\delta_2 \rightarrow 0$ in $\Omega$ and $\Omega_L$, respectively.
\end{proof}

\subsection{Approximate Inverse Transformation}
It is well-known that Volterra operators of the second kind \eqref{Vol_Op} with continuous kernel have a unique solution which is globally invertible \cite{L-2009,L-2003,K-2014}. According to the standard Backstepping PDE procedure \cite{KS-2008,SK-2010}, the inverse Kernel $L$ in \eqref{Inv_Vol_Op} is determined following the same approach that leads to the Kernel $K$, and computed via the Successive Approximation method. However, this procedure is not suitable for residual functions $\delta_1(x)\neq 0$, $\delta_2(x,y)\neq 0$. In this section, a methodology based on the Moment problem described in \cite{L-2010,BC-2006,HLM-2014} to find an approximate smooth solution of the inverse kernel $L$ in \eqref{Inv_Vol_Op} is proposed\footnote{Alternatively, an optimization problem similar to Proposition \ref{Prop_2} can be applied directly to \eqref{DI_kernels} to find an approximation of $L$. This is formulated in Section \ref{Inv_Fredholm} for the Fredholm-type operator. Instead of using \eqref{DI_kernels}, the standard Backstepping PDE approach computes the inverse kernel following the same approach that leads to the direct kernel $K$ \cite{KS-2008}.}.

\begin{proposition}
\label{Prop_3}
Let
\begin{flalign}
\label{DI_kernels}
L(x,y) -K(x,y) - \int_y^x K(x,s)L(s,y)ds & = 0
\end{flalign}
$(\mathbb{T}[L(\cdot,\cdot)](x,y)=0)$ $\forall \ (x,y) \in \Omega_L$ be the relation satisfied by any direct kernel and its inverse with the Volterra-type transformations \eqref{Vol_Op} and \eqref{Inv_Vol_Op}. Let \eqref{L_decomp} be the positive decomposition of $L$, where $\sigma$ is its minimum value in $\Omega_L$. Let $N(x,y)=\sum_{k=0}^{z(d)-1} n_k x^{\alpha_k} y^{\beta_k}$ be a known polynomial approximation of $K$ in accordance with \eqref{Def_pol}, a sequence of approximate moments $m_{i,j}$ can be determined by the convex optimization problem:

\noindent
%\small
\begin{flalign}
\label{Mom_opt_1}
&\underset{\gamma_j, \varrho_j, m_{i,j},\sigma}{\text{minimize:}} \qquad \gamma_1+ \gamma_2 + \varrho_2-\varrho_1 \\
\label{Mom_opt_2}
&\text{subject to:} \quad \rho_{s_{ij}} \geq -\gamma_1, \ \rho_{s_{ij}} \leq \gamma_2, \\
\label{Mom_opt_3} & \sum_{i=0}^{2d_m}\sum_{j=0}^{2d_m} m_{i,j} \!+\! \sigma \!-\!\varrho_1 \geq 0, \
\varrho_2\!-\! \sum_{i=0}^{2d_m}\sum_{j=0}^{2d_m} m_{i,j}\! -\! \sigma  \geq 0, \\
\label{Mom_opt_5}
& m_{i,j}-\frac{\sigma}{(j\!+\!1)(i\!+\!j\!+\!2)}-\int_0^1\int_0^x N(x,y) x^i y^j dy dx \nonumber \\ &-\sum_{k=0}^{z(d)-1}\frac{n_k}{i\!+\!\alpha_k\!+\!1} \bigg( m_{\beta_k,j}-m_{i\!+\!\alpha_k\!+\!\beta_k+1,j} \nonumber \\
& \left.-\frac{\sigma}{j\!+\!1}\left(\frac{1}{j\!+\!\beta_k\!+\!2}+\frac{1}{i\!+\!j+\!\alpha_k\!+\!\beta_k\!+\!3} \right) \right) = \rho_{s_{ij}},\\
%\end{flalign}
%\vspace{-0.3cm}
%\begin{flalign}
\label{Mom_opt_6}
& M_0 \geq 0, \ M_1 \succeq 0,\ M_2 \succeq 0, \ldots, M_{d_m} \succeq 0, \\
\label{Mom_opt_7}
& R_0 \geq 0, \ R_1 \succeq 0,\ R_2 \succeq 0, \ldots, R_{d_m-1} \succeq 0, \\
\label{Mom_opt_8}
& S_0 \geq 0, \ S_1 \succeq 0,\ S_2 \succeq 0, \ldots, S_{d_m-1} \succeq 0, \\
%\label{Mom_opt_9}
%& T_0 \geq 0, \ T_1 \succeq 0,\ T_2 \succeq 0, \ldots, T_{d_m-2} \succeq 0, \\
\label{Mom_opt_10}
& M_0\!=\!m(I_0)\!=\![m_{0,0}], M_1\!=\!m(I_1)\!=\!\begin{bmatrix} M_0 & m_{1,0} & m_{0,1} \\
                                        m_{1,0} & m_{2,0} & m_{1,1} \\
                                        m_{0,1} & m_{1,1} & m_{0,2} \end{bmatrix}, \nonumber \\
                                        & M_2=m(I_2), \ldots, M_{d_m}=m(I_{d_m}), \\
\label{Mom_opt_11}
& R_r =m(I_r+g_1(1))-m(I_r+g_1(2)), \\
\label{Mom_opt_12}
& S_r =m(I_r+g_2(1))-m(I_r+g_2(2)), %\\
%& T_r =-m(I_r+g_3(1))+m(I_r+g_3(2)) + m(I_r+g_3(3)) \nonumber \\
%\label{Mom_opt_13}
%& \qquad -m(I_r+g_3(4)),
\end{flalign}

\vspace{-0.6cm}

\small
\begin{flalign}
\label{Mom_opt_14}
& \Scale[0.90]{I_r\!=\!\begin{bmatrix} \begin{bmatrix} \begin{bmatrix} [0,0] & 1,0 & 0,1 \\
                                        1,0 & 2,0 & 1,1 \\
                                        0,1 & 1,1 & 0,2 \end{bmatrix} & \begin{matrix} 2,0 & 1,1 & 0,2\\
                                                                                       3,0 & 2,1 & 1,2\\
                                                                                       2,1 & 1,2 & 0,3 \end{matrix} \\
        \begin{matrix} 2,0 & 3,0 & 2,1\\
                       1,1 & 2,1 & 1,2\\
                       0,2 & 1,2 & 0,3 \end{matrix} &
                       \begin{matrix}  4,0 & 3,1 & 2,2 \\
                                        3,1 & 2,2 & 1,3 \\
                                        2,2 & 1,3 & 0,4 \end{matrix}
                      \end{bmatrix} &  \begin{matrix} 3,0 & \ldots \\ 4,0 & \ldots \\ 3,1 & \ldots \\ 5,0 & \ldots \\ 4,1 & \ldots \\ 3,2 & \ldots \end{matrix} \\
                      \begin{matrix} & 3,0 & 4,0 & 3,1 & & 5,0 & 4,1 & 3,2 \\ & \vdots & \vdots & \vdots & &  \vdots & \vdots & \vdots \end{matrix} & \begin{matrix} 6,0 & \ldots \\ \vdots & \ddots \end{matrix}
                       \end{bmatrix} },
\end{flalign}
\normalsize

\vspace{-0.6cm}

\begin{flalign}
\label{Mom_opt_15}
& g_1=  [ (1,0) , (2,0)], \ g_2=[ (1,1) , (0,2)], \qquad \qquad \qquad %\\
%\label{Mom_opt_16}
%& g_3=[ (1,2) , (2,1), (2,2), (3,1)],
\end{flalign}
where $M_r$ are moment matrices of dimension $z(r)\!=(r+1)(r+2)/2$, $R_r$ and $S_r$
%,T_r$%
are localizing matrices (their sequence is limited by $r \leq d_m-\lceil \max(deg(g_i))/2 \rceil$) associated to the compact basic semi-algebraic set description of $\Omega_L$ \cite{L-2010}, the entries of which are indexed by $I_r$ according to the order of the powers in the canonical polynomial basis $\Phi_r$ in \eqref{Def_pol}; $s_{ij}=i(2d_m+1)+j$, $\forall \ i=0,\ldots,2d_m-d-1$, $j=0,\ldots,2d_m$, for an arbitrary moment order $d_m \geq \left\lceil\frac{d+1}{2}\right\rceil$ and $\gamma_1 \geq 0$, $\gamma_2 \geq 0$, $\varrho_1 \geq 0$ and $\varrho_2 \geq 0$.
\end{proposition}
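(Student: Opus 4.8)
The plan is to convert the bilinear integral identity \eqref{DI_kernels} relating the direct kernel $K$ and its inverse $L$ into a finite, affine system of relations among the moments \eqref{L_mom} of the positive part $\breve L$, and then to impose the $\mathbf{K}$-moment positivity conditions so that the selected sequence corresponds to a genuine non-negative measure on $\Omega_L$. First I would multiply \eqref{DI_kernels} by the monomial $x^i y^j$ and integrate over $\Omega_L$. Through the decomposition \eqref{L_decomp}, the term $L(x,y)$ contributes $m_{i,j}-\sigma\int_0^1\!\int_0^x x^i y^j\,dy\,dx = m_{i,j}-\sigma/[(j+1)(i+j+2)]$, which is exactly the first two terms of \eqref{Mom_opt_5}; the term $K(x,y)\approx N(x,y)$ contributes the fully explicit integral $\int_0^1\!\int_0^x N(x,y)x^i y^j\,dy\,dx$, since $N$ is known.

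The delicate step, and the main obstacle, is the bilinear term $\int_y^x K(x,s)L(s,y)\,ds$. Substituting $N(x,s)=\sum_k n_k x^{\alpha_k}s^{\beta_k}$ and $L=\breve L-\sigma$, the resulting triple integral is taken over the simplex $\{0\le y\le s\le x\le 1\}$; integrating the outer variable $x$ first over $[s,1]$ produces the factor $(1-s^{i+\alpha_k+1})/(i+\alpha_k+1)$, so that what remains are moments of $L$ at the shifted indices $(\beta_k,j)$ and $(i+\alpha_k+\beta_k+1,j)$. Re-expressing these through $\breve L$ yields the coefficients $m_{\beta_k,j}-m_{i+\alpha_k+\beta_k+1,j}$ together with the explicit rational $\sigma$-corrections collected inside the bracket of \eqref{Mom_opt_5}. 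The hard part is precisely this bookkeeping: ordering the integrations over the simplex, tracking every $\sigma$-contribution, and checking that all shifted indices stay within the truncated sequence, $i+\alpha_k+\beta_k+1\le 2d_m$, which is what forces the admissible ranges $i=0,\dots,2d_m-d-1$, $j=0,\dots,2d_m$ and the bound $d_m\ge\lceil (d+1)/2\rceil$. The slack $\rho_{s_{ij}}$ in \eqref{Mom_opt_5} then measures the failure of $N$ to satisfy the moment form of \eqref{DI_kernels} exactly, and vanishes when $N=K$.

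Next I would invoke positivity. Because $\breve L\ge 0$ on $\Omega_L$, which by Lemma \ref{prop_K} is a compact basic semi-algebraic set whose quadratic module is Archimedean, the $\mathbf{K}$-moment characterization underlying \eqref{mom_relax} applies. Hence $(m_{i,j})$ is the moment sequence of a non-negative Borel measure on $\Omega_L$ if and only if the moment matrix and the localizing matrices associated with $g_1$ and $g_2$ are positive semidefinite; written in the shifted-index notation \eqref{Mom_opt_10}--\eqref{Mom_opt_15}, these are exactly the constraints \eqref{Mom_opt_6}--\eqref{Mom_opt_8}. This guarantees that the recovered moments are consistent with an admissible non-negative density $\breve L$, and therefore with a genuine inverse kernel $L=\breve L-\sigma$.

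Finally I would assemble the program and verify convexity. The moment relations \eqref{Mom_opt_5} are affine in $(m_{i,j},\sigma,\rho_{s_{ij}})$, the entries of $M_r,R_r,S_r$ depend linearly on the moments so \eqref{Mom_opt_6}--\eqref{Mom_opt_8} are linear matrix inequalities, and the envelopes \eqref{Mom_opt_2}--\eqref{Mom_opt_3} together with the objective \eqref{Mom_opt_1} are linear; the problem is thus a semidefinite program. To close the argument I would record the role of the objective: minimizing $\gamma_1+\gamma_2$ under \eqref{Mom_opt_2} drives $\max_{ij}|\rho_{s_{ij}}|\to 0$, enforcing the moment form of \eqref{DI_kernels} as tightly as the truncation permits, while the envelope \eqref{Mom_opt_3} on the aggregate quantity $\sum m_{i,j}+\sigma$, whose width $\varrho_2-\varrho_1$ is penalized in \eqref{Mom_opt_1}, keeps the truncated moment program bounded and selects a consistent sequence.
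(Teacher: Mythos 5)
Your proposal is correct and follows essentially the same route as the paper: testing \eqref{DI_kernels} against the monomials $x^i y^j$ over $\Omega_L$ (the paper phrases this as the weak formulation $\langle \mathbb{T}[L],\phi_i\phi_j\rangle_{\mathcal{L}^2(\Omega_L)}=0$), substituting $L=\breve{L}-\sigma$ and $K\approx N$, integrating out the simplex to land on the shifted moments $m_{\beta_k,j}$ and $m_{i+\alpha_k+\beta_k+1,j}$ with the $\sigma$-corrections of \eqref{Mom_opt_5}, and then imposing the truncated Putinar moment/localizing PSD conditions \eqref{Mom_opt_6}--\eqref{Mom_opt_8} together with the linear residual bounds and boundedness constraints to obtain a semidefinite program. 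The only cosmetic differences are the order of integration in the bilinear term and your explicit index-range bookkeeping, which the paper leaves implicit.
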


\begin{proof}
Since the zero function is the only function orthogonal to every element in an inner product space:
$\langle \mathbb{T}[L], v\rangle_{\mathcal{L}^2}\!=\!0$, $\forall v \in \mathcal{L}^2(\Omega_L)$ $\Leftrightarrow \mathbb{T}[L] = 0$ (weak formulation) \cite{O-2014}, and due to the fact that the canonical polynomial basis $\Phi_r(x,y)=\{\phi_i(x)\phi_j(y)=x^i y^j; i+j \leq r, \forall i, j, r \in \mathbb{N}\}$ generates a dense subset of the (separable) Hilbert space $\mathcal{L}^2(\Omega)$, an approximate solution of \eqref{DI_kernels} can be found by means of the set of linear equations $\langle \mathbb{T}[L], \phi_i(x)\phi_j(y)\rangle_{\mathcal{L}^2(\Omega_L)}=0$ (finite dimensional problem):

\noindent
\begin{flalign}
& \int_0^1 \! \! \! \phi_i(x) \int_0^x \! \! \! L(x,y) \phi_j(y) dy dx \! - \! \int_0^1 \! \! \! \phi_i(x) \int_0^x \! \! \! N(x,y) \phi_j(y) dy dx \nonumber \\
\label{DI_ki}
& -\int_0^1 \phi_i(x)\int_0^x N(x,y) \int_0^y L(y,s) \phi_j(s) ds dy dx=0,
\end{flalign}

\noindent
for $N$ a known polynomial approximation of $K$. Interchanging the order of integration in the last term of \eqref{DI_ki}, substituting $L$ by \eqref{L_decomp} and plugging in the polynomial series $N$ and the test functions $\phi_i(x)=x^i, \phi_j(y)=y^j$ yields:

\vspace{-0.1cm}

\noindent
%\small
\begin{flalign}
\label{DI_mom_1}
& \int_0^1\!\!\!\int_0^x \!\!\!\breve{L}(x,y)x^i y^j dy dx -\frac{\sigma}{(j+1)(i+j+2)} \nonumber\\
& - \int_0^1 \!\!\!\left( \int_0^y \!\!\!(\breve{L}(y,s)\!-\!\sigma) s^j ds \right) \!\!\!\ \sum_{k=0}^{z(d)-1} \!\!\!\ n_k y^{\beta_k} \left(\frac{1\!-\!y^{i\!+\!\alpha_k\!+\!1}}{i\!+\!\alpha_k\!+\!1}\right) dy \nonumber\\
& - \int_0^1\!\!\!\int_0^x \!\!\!N(x,y) x^i y^j dy dx,
\end{flalign}
%\normalsize

\vspace{-0.1cm}

\noindent
$\forall i \in \mathbb{N}$, $j \in \mathbb{N}$, $(i\!+\!j)\leq r$. Thus, expanding the second term in \eqref{DI_mom_1}, applying definition \eqref{L_mom} and  keeping the integral of the third term for numerical computation ($N$ is known), the expression \eqref{Mom_opt_5} is obtained. This equation is written as a residual function $\rho_{s_{ij}}$, the upper and lower bounds of which are set forth by \eqref{Mom_opt_2} and included in the optimization index \eqref{Mom_opt_1}, so that $\rho_{s_{ij}} \rightarrow 0$ as required. Since $L$ is not necessarily positive, it is decomposed as in \eqref{L_decomp}, where $\breve{L} \geq 0$ is considered as a density function for a Borel measure $\mu$: $d\mu=\breve{L}dx$, $m_{i,j}=\int_{\Omega_L}x^iy^j d\mu$. Based on the Putinar's representation of non-negative polynomials in a compact basic semialgebraic set \cite{ML-2010,BC-2006} described by \eqref{Mom_opt_15} in terms of polynomial powers, $\mathbf{y}_r=(m_{i,j})_{(i+j)\leq r}$ is a sequence of moments if and only if the Moment matrix $M_r$ and the Localizing matrices $R_r$, and $S_r$ are positive semi-definite $\forall r \in \mathbb{N}$. However, in practice, a truncated Moment problem can be solved ($r=0,\ldots,d_m$) as is formulated by conditions \eqref{Mom_opt_6}-\eqref{Mom_opt_8}. Finally, via the cost function \eqref{Mom_opt_1} with respect to $\varrho_1$ and $\varrho_2$, and the conditions of minimum and maximum \eqref{Mom_opt_3}, a bounded optimization problem is enforced, the solution of which yields a \quotes{valid}\footnote{Proposition \ref{Prop_3} states conditions to obtain a sequence of real numbers which is a \quotes{valid} sequence of moments, i.e., it corresponds to a moments of a non-negative function \cite{BC-2006}.} sequence of approximate moments minimizing \eqref{Mom_opt_5}.
\end{proof}

\begin{proposition}
\label{Prop_4}
Let $\breve{L}_r=\sum_{k=0}^{z(r)-1} l_k x^{\alpha_k} y^{\beta_k}$ be a polynomial approximation of $\breve{L}$ in \eqref{L_decomp} in accordance with \eqref{Def_pol}. Given a sequence of approximate moments $\mathbf{y}_r=(m_{i,j})_{(i\!+\!j)\leq r}$, $\breve{L}$ can be approximated as $\breve{L}_r=\theta^{\top} \Phi_r(x,y)$ via the solution of the convex optimization problem:

\noindent
\begin{flalign}
\label{Mom_rec_1}
& \underset{\gamma_j\geq 0, \theta, s_j}{\text{minimize:}}   \qquad \gamma_1 + \gamma_2 \\
& \text{subject to:}  \nonumber \\
\label{Mom_rec_2}
& (\mathbf{M}_r \theta-\mathbf{y}_r) + \gamma_1 \geq 0, \quad
\gamma_2 -(\mathbf{M}_r \theta-\mathbf{y}_r) \geq 0, \\
\label{Mom_rec_4}
& \theta^{\top} \Phi_r(x,y)\!-\!s_0(x,y)\!-[s_1(x,y) \ s_2(x,y)] g(x,y) \!=\! 0, \\
\label{Mom_rec_5}
& s_0,s_1,s_2 \in \Sigma_s,
\end{flalign}

\noindent
where $\gamma_1 \geq 0$, $\gamma_2 \geq 0$, $g=[g_1, g_2]^{\top}$ with $g_1(x)=x(1-x)$ and $g_2(x,y)=y(x-y)$; polynomials $s_0$ of degree $r$ and $s_1,s_2$ of degree $r\!-\!2$;
$\theta=[l_0,l_1,\ldots,l_{z(r)-1}]^{\top}$ and $\mathbf{M}_r = \int_0^1\int_0^x \Phi_r(x,y) \Phi_r^{\top}(x,y) dy dx \in \mathbb{S}_{+}^{z(r)}$, the elements of which are determined by:

\noindent
\begin{flalign}
\label{Mom_rdef_1}
& \mathbf{M}_r(i,j)= \frac{1}{(\alpha_i\!+\!\alpha_j\!+\!\beta_i\!+\!\beta_j\!+\!2)(\beta_i\!+\!\beta_j\!+\!1)},
\end{flalign}

\noindent
where the coefficients $\alpha_i, \beta_i$ correspond to the powers of the canonical polynomial basis $\Phi_r$ \eqref{Def_pol}. In addition, the sequence of minimizers $\breve{L}_r^*$, $\forall \ r \in \mathbb{N}$,  is such that $\|\breve{L} - \breve{L}_r^*\|_{\mathcal{L}^2(\Omega_L)} \rightarrow 0$ as $r \rightarrow \infty$, and therefore $L(x,y)\approx \breve{L}_r^*(x,y)-\hat{\sigma}$.
\end{proposition}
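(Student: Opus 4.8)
The plan is to split the argument into two independent pieces: the closed form of the Gram matrix $\mathbf{M}_r$ in \eqref{Mom_rdef_1}, and the $\mathcal{L}^2$-convergence of the minimizers. First I would verify \eqref{Mom_rdef_1} by direct integration: the $(i,j)$ entry of $\mathbf{M}_r=\int_0^1\int_0^x \Phi_r\Phi_r^{\top}\,dy\,dx$ equals $\int_0^1\int_0^x x^{\alpha_i+\alpha_j} y^{\beta_i+\beta_j}\,dy\,dx$, and evaluating the inner integral over $[0,x]$ and then the outer one over $[0,1]$ produces exactly the two denominator factors $(\beta_i+\beta_j+1)$ and $(\alpha_i+\alpha_j+\beta_i+\beta_j+2)$. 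Since the monomials $\{x^{\alpha_k}y^{\beta_k}\}$ are linearly independent on $\Omega_L$, $\mathbf{M}_r$ is a genuine Gram matrix, hence $\mathbf{M}_r\in\mathbb{S}_+^{z(r)}$ is positive definite and invertible. The observation that drives the rest of the proof is that for any coefficient vector $\theta$ one has $\mathbf{M}_r\theta=\int_{\Omega_L}\Phi_r\,(\theta^{\top}\Phi_r)=\int_{\Omega_L}\Phi_r\,\breve{L}_r$, i.e. $\mathbf{M}_r\theta$ is precisely the vector of moments up to order $r$ of the candidate $\breve{L}_r=\theta^{\top}\Phi_r$, so the residual $\mathbf{M}_r\theta-\mathbf{y}_r$ in \eqref{Mom_rec_2} is the mismatch between the moments of $\breve{L}_r$ and the prescribed moment sequence $\mathbf{y}_r$ of $\breve{L}$.

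Next I would identify the unconstrained moment-matching solution with an orthogonal projection. Solving $\mathbf{M}_r\theta=\mathbf{y}_r$ gives $\theta=\mathbf{M}_r^{-1}\mathbf{y}_r$, whose polynomial $P_r\breve{L}:=(\mathbf{M}_r^{-1}\mathbf{y}_r)^{\top}\Phi_r$ is exactly the $\mathcal{L}^2(\Omega_L)$-orthogonal projection of $\breve{L}$ onto $\mathrm{span}(\Phi_r)$, so that $\|\breve{L}-P_r\breve{L}\|_{\mathcal{L}^2}\to 0$ as $r\to\infty$ by density of polynomials in $\mathcal{L}^2(\Omega_L)$ (Weierstrass). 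For a general feasible $\theta$ with residual $\mathbf{e}=\mathbf{M}_r\theta-\mathbf{y}_r$, the substitution $\theta-\mathbf{M}_r^{-1}\mathbf{y}_r=\mathbf{M}_r^{-1}\mathbf{e}$ yields the exact identity $\|\theta^{\top}\Phi_r-P_r\breve{L}\|_{\mathcal{L}^2}^2=\mathbf{e}^{\top}\mathbf{M}_r^{-1}\mathbf{e}$, so controlling the moment mismatch $\mathbf{e}$ controls the $\mathcal{L}^2$ distance to the projection. Here the constraints \eqref{Mom_rec_2} bound $\mathbf{e}$ componentwise between $-\gamma_1$ and $\gamma_2$, while \eqref{Mom_rec_4}--\eqref{Mom_rec_5} form the degree-bounded Putinar certificate guaranteeing $\breve{L}_r\geq 0$ on $\Omega_L$, consistently with $\breve{L}\geq 0$.

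To close the convergence argument I would bound the optimal value from above by exhibiting a feasible point with vanishing mismatch. Since $\breve{L}\geq 0$ is continuous on the compact set $\Omega_L$, Weierstrass supplies polynomials $p_r$ with $\|p_r-\breve{L}\|_{\infty}\to 0$; adding a constant $\varepsilon_r\downarrow 0$ larger than this vanishing sup-norm error makes $q_r:=p_r+\varepsilon_r$ \emph{strictly} positive on $\Omega_L$ while keeping $\|q_r-\breve{L}\|_{\mathcal{L}^2}\to 0$. By Putinar's Positivstellensatz (the Archimedean property of $\Omega_L$ is Lemma \ref{prop_K}), $q_r$ admits an SOS certificate of the form \eqref{Mom_rec_4}, which is admissible once $r$ exceeds the certificate degree; its intrinsic error $\hat{\mathbf{e}}_r^{\top}\mathbf{M}_r^{-1}\hat{\mathbf{e}}_r=\|q_r-P_r\breve{L}\|_{\mathcal{L}^2}^2\to 0$. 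Optimality of $\breve{L}_r^{*}$ then forces its moment mismatch, and hence $\|\breve{L}_r^{*}-P_r\breve{L}\|_{\mathcal{L}^2}$, to zero; combining with the triangle inequality and $\|\breve{L}-P_r\breve{L}\|_{\mathcal{L}^2}\to 0$ gives $\|\breve{L}-\breve{L}_r^{*}\|_{\mathcal{L}^2}\to 0$, and finally $L=\breve{L}-\sigma\approx\breve{L}_r^{*}-\hat{\sigma}$.

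The main obstacle I anticipate is reconciling the \emph{fixed-degree} SOS constraint with asymptotic moment matching: Putinar guarantees only that \emph{some} certificate degree exists for a strictly positive polynomial, whereas \eqref{Mom_rec_4}--\eqref{Mom_rec_5} fix $\deg s_0=r$ and $\deg s_{1,2}=r-2$, so one must argue that the certificate degree of the chosen approximants $q_r$ grows no faster than $r$. A secondary technical point is that the minimized objective $\gamma_1+\gamma_2$ controls $\mathbf{e}$ in the $\ell_\infty$ sense, while the $\mathcal{L}^2$ error is the weighted quantity $\mathbf{e}^{\top}\mathbf{M}_r^{-1}\mathbf{e}$; because the monomial basis is ill-conditioned I would keep the final estimate intrinsic, expressed as the basis-independent $\mathcal{L}^2$ norm of $\theta^{\top}\Phi_r-P_r\breve{L}$ evaluated at the feasible point $q_r$, rather than passing through $\|\mathbf{M}_r^{-1}\|\,\|\mathbf{e}\|^2$, so that the bound does not deteriorate with the spectral growth of $\mathbf{M}_r^{-1}$.
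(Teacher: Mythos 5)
Your computation of the Gram matrix entries \eqref{Mom_rdef_1} by direct integration, your reading of \eqref{Mom_rec_2} as a moment-mismatch constraint, and your use of the Putinar certificate \eqref{Mom_rec_4}--\eqref{Mom_rec_5} for nonnegativity on $\Omega_L$ (via the Archimedean property from Lemma~\ref{prop_K}) all coincide with the paper's proof. Where you part ways is the convergence claim: the paper does not prove it self-containedly. Instead, following \cite{HLM-2014}, it frames the problem as minimization of the quadratic functional $J(\theta)=\tfrac{1}{2}\theta^{\top}\mathbf{M}_r\theta-\theta^{\top}\mathbf{y}_r$, which differs from $\tfrac{1}{2}\|\breve{L}-\theta^{\top}\Phi_r\|_{\mathcal{L}^2(\Omega_L)}^2$ only by the $\theta$-independent constant $\tfrac{1}{2}\|\breve{L}\|^2$, notes that its unconstrained minimizer solves $\mathbf{M}_r\theta=\mathbf{y}_r$ (the system that \eqref{Mom_rec_1}--\eqref{Mom_rec_2} enforce approximately, to avoid inverting $\mathbf{M}_r$), and then cites \cite[Proposition 4]{HLM-2014} for $\|\breve{L}-\breve{L}_r^*\|_{\mathcal{L}^2(\Omega_L)}\rightarrow 0$. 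You attempt that convergence proof from scratch, via the projection identity, a Weierstrass-plus-Putinar feasible point, and an optimality-transfer step.

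The gap is in that last step: \emph{optimality of $\breve{L}_r^*$ does not force $\|\breve{L}_r^*-P_r\breve{L}\|_{\mathcal{L}^2}$ to zero}. The objective \eqref{Mom_rec_1} controls only the componentwise ($\ell_\infty$) size of the optimizer's mismatch $\mathbf{e}^*=\mathbf{M}_r\theta^*-\mathbf{y}_r$: feasibility of your $q_r$ gives $\gamma_1^*+\gamma_2^*\rightarrow 0$, hence $\|\mathbf{e}^*\|_\infty\rightarrow 0$, but by your own exact identity the $\mathcal{L}^2$ distance is $(\mathbf{e}^*)^{\top}\mathbf{M}_r^{-1}\mathbf{e}^*$, and the monomial Gram matrix on $\Omega_L$ is Hilbert-matrix-like, with $\|\mathbf{M}_r^{-1}\|$ (and the dimension $z(r)$) growing without bound in $r$. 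So smallness of $\|\mathbf{e}^*\|_\infty$ yields nothing without a decay rate that beats $\|\mathbf{M}_r^{-1}\|$, which the argument does not provide. Your proposed remedy, keeping the estimate ``intrinsic'' at the feasible point $q_r$, bounds the wrong quantity: it bounds $\|q_r-P_r\breve{L}\|_{\mathcal{L}^2}$, which you already know vanishes, whereas the only information optimality gives about $\breve{L}_r^*$ is that its \emph{objective value} is at most that of $q_r$; since the objective is not the intrinsic norm, the intrinsic bound does not transfer to the optimizer. This is exactly what the quadratic objective of \cite{HLM-2014} (the route the paper takes) buys: there the objective \emph{is}, up to an additive constant, the $\mathcal{L}^2$ error, so a feasible point with small error immediately forces the optimizer's error to be small --- and your analysis in fact exposes that replacing that objective by $\ell_\infty$ moment matching, as the paper's SDP does, is not an innocuous change. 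By contrast, the issue you flag as the ``main obstacle'' (Putinar certificate degree) is not an obstacle at all: fix $\epsilon>0$, choose one strictly positive polynomial $q$ with $\|q-\breve{L}\|_{\mathcal{L}^2}<\epsilon$ and one certificate of some finite degree $D$; then $q$ is feasible for every $r\geq D$, which suffices for a limsup argument, and no control of certificate-degree growth along a sequence $q_r$ is needed.
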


\vspace{-0.1cm}

\begin{proof}
Let $\phi_k(x,y)=x^{\alpha_k}y^{\beta_k}$ be the $k$-th element of the vector of monomial basis $\Phi_r$ in \eqref{Def_pol}. The $(i, j)$-th element (row and column, respectively) of the Moment matrix is given by
%\begin{flalign}
%\label{mom_matr_0}
$\mathbf{M}_r(i,j) = \int_0^1 \int_0^x \left(x^{\alpha_i}y^{\beta_i}\right)\left(x^{\alpha_j}y^{\beta_j}\right)dy dx$,
%\end{flalign}
the value of which is indicated in \eqref{Mom_rdef_1}. Since $\breve{L}_r^2=\big(\sum_{k=0}^{z(r)-1} l_k \phi_k \big) \allowbreak \big(\sum_{t=0}^{z(r)-1}l_t \phi_t \big)=\theta^{\top} \Phi_r \Phi_r^{\top} \theta$, where $\theta=[l_0,l_1,\ldots,l_{z(r)-1}]^{\top}$, following \cite{HLM-2014}, the mean square error of the approximation of $\breve{L}$ can be upper bounded by $E(\breve{L}_r)\!=\!\|\breve{L} \!-\! \breve{L}_r\|_{\mathcal{L}^2(\Omega_L)}^2 \leq \theta^T \mathbf{M}_r \theta - 2 \theta^T \mathbf{y}_r \!:= \! 2 J(\theta)$, where the \quotes{known} valid sequence of approximate moments $\mathbf{y}_r=(\int_{\Omega_L}\phi_k(x,y)d\mu)_{\alpha_k+\beta_k\leq r}$ is ordered according to

%\vspace{-0.15cm}

%
%\begin{flalign}
%\label{mom_quad_1}
%& E(\breve{L}_r)\!=\!\|\breve{L} \!-\! \breve{L}_r\|_{\mathcal{L}^2(\Omega_L)}^2 \leq \theta^T \mathbf{M}_r %\theta - 2 \theta^T \mathbf{y}_r \!:= \! 2 J(\theta),
%\end{flalign}
%
\begin{comment}
\begin{flalign}
 E(\breve{L}_r)&=\|\breve{L} - \breve{L}_r\|_{\mathcal{L}^2(\Omega_L)}^2 = \int_{\Omega_L}\left(\breve{L}-\breve{L}_r\right)^2 dx \nonumber \\
 &=\int_{\Omega_L}\breve{L}_r^2 dx-2\int_{\Omega_L}\breve{L}_r\underbrace{\breve{L} dx}_{d\mu}+ \underbrace{\int_{\Omega_L} \breve{L^2} dx}_{\geq 0} \nonumber \\
& \leq  \theta^{\top} \left(\int_{\Omega_L}\Phi_r \Phi_r^{\top} dydx \right)\theta \!-\! 2 \sum_{k=0}^{z(r)-1} l_k \int_{\Omega_L}\!\phi_k(x,y)d\mu \nonumber \\
\label{Mom_quad_1}
& = \theta^{\top} \mathbf{M}_r \theta - 2 \theta^{\top} \mathbf{y}_r := 2 J(\theta),
\end{flalign}
\end{comment}
%

%\vspace{-0.15cm}

%\noindent
%where the \quotes{known} valid sequence of approximate moments %$\mathbf{y}_r=(\int_{\Omega_L}\phi_k(x,y)d\mu)_{\alpha_k+\beta_k\leq r}$ is ordered according to

\noindent
\begin{flalign}
\label{Mom_rdef_2}
& \mathbf{y}_r=[m_{0,0},\ldots,m_{z(r),0},m_{0,1},\ldots,m_{z(r),1}, \ldots \nonumber \\
& \qquad \ m_{0,z(r)},\ldots, m_{z(r),z(r)}]^{\top}.
\end{flalign}

\vspace{-0.1cm}

\noindent
Since $J$ is quadratic in terms of the unknown vector $\theta$, its minimizer satisfies $\frac{d}{d\theta} \left(\frac{1}{2}\theta^T \mathbf{M}_r \theta - \theta^T \mathbf{y}_r\right)=\theta(\mathbf{M}_r+\mathbf{M}_r^T)/2-\mathbf{y}_r^T=0$ ($\mathbf{M}_r \theta = \mathbf{y}_r$), which has a global minimizer since $\frac{d^2J(\theta)}{d\theta^2}=\mathbf{M}_r \in \mathbb{S}_{+}^{z(r)}$ \cite{BV-2004}. To avoid the matrix inversion in $\theta=\mathbf{M}_r^{-1}\mathbf{y}_r$ and instead of a SDP formulation of $J$ via the Schur complement \cite{HLM-2014}, the approach proposed defines a lower bound $\gamma_1$ and an upper bound $\gamma_2$ via \eqref{Mom_rec_2}, which by means of \eqref{Mom_rec_1} imposes  $\mathbf{M}_r \theta \rightarrow \mathbf{y}_r$
%(this reduces considerably problems associated to the ill-condition of $\mathbf{M}_r$)
. Thus, since the quadratic module associated to the representation of $\Omega_L$ is Archimedean (see Lemma \ref{prop_K}), $\breve{L}_r=\theta^T\Phi_r(x,y) \geq 0$, $\forall \ (x,y) \in \Omega_L$, if this satisfies the Putinar's Positivstellensatz representation given by \eqref{Mom_rec_4}-\eqref{Mom_rec_5} for some polynomials $s_0,s_1$ and $s_2$ with SOS decomposition. A proof of $\|\breve{L} - \breve{L}_r^*\|_{\mathcal{L}^2(\Omega_L)} \rightarrow 0$ as $r \rightarrow \infty$ is found in \cite[Proposition 4]{HLM-2014}.
\end{proof}

%%%%%%%%%%%%%%%%%%%%%%%%%%%%%%%%%%%%%%%%%%%%%%%%%%%%%%%%%%%%%%%%%%%%%%%%%%%%%%%%
\section{Hyperbolic PIDE and the Volterra-Fredholm Operator}
\label{Fredholm}

\subsection{Problem Setting}
In this section a class of first-order hyperbolic PIDEs (Partial Integral Differential Equations) with non-causal structure is considered (see details in \cite{BK-2014,BK-2015}); namely

\noindent
\begin{flalign}
%\label{SH_1}
u_t(x,t)&= u_{x}(x,t)+f(x)u(0,t)+\int_0^x h_1(x,y)u(y,t)dy \nonumber\\[-0.2cm]
\label{SH_1}
&  + \int_x^1 h_2(x,y) u(y,t)dy,\\
%\label{SH_2}
u(1,t)&=U(t), \nonumber
\end{flalign}
%\normalsize
%
where $u(x,0)=u_0(x) \in \mathcal{C}(\Omega)$ is the initial condition and $f,h_1,h_2$ are real-valued continuous functions. The aim is to find a control action $U$ so that the origin of \eqref{SH_1} is finite-time stable in the topology of the $\mathcal{L}^2$-norm. For this class of system, \cite{BK-2014,BK-2015} (see also \cite{TBK-2014} for parabolic systems) have proposed a Fredholm-type transformation (here referred to as Fredholm operator), namely

\noindent
%
%\small
\begin{align}
\label{Fred_Op}
w(x,t)&=u(x,t)\!-\!\!\int_0^x P(x,y)u(y,t)dy\!-\!\!\int_x^1 Q(x,y)u(y,t)dy, \nonumber\\
&=(\mathbb{I}-\mathbb{F}_{P,Q})[u(\cdot,t)](x),
\end{align}
%\normalsize
%
where $\mathbb{F}_{P,Q}$: $\mathcal{C}(\Omega;\mathbb{R})\rightarrow \mathcal{C}(\Omega;\mathbb{R})$ is a linear operator in terms of Kernels $P$ and $Q$ in the lower $\Omega_L$ and upper $\Omega_U$ triangular domain, respectively, to transform the original system \eqref{SH_1} into the target stable system:
\begin{align}
\label{TH_1}
\begin{split}
w_t(x,t)&= w_{x}(x,t),\\
%\label{TH_2}
w(1,t)&=0,
\end{split}
\end{align}
with a boundary feedback control determined by $U(t)=\int_0^1 P(1,y)u(y,t)dy$. Following the standard Backstepping PDE design procedure (detailed in \cite{BK-2015}), the transformed system \eqref{SH_1} takes the form:
\begin{flalign}
\label{H1_T}
\begin{split}
& w_{t}(x,t)\!-\! w_{x}(x,t)\! = \! \delta_0(x) u(0,t) - \delta_3(x)u(1,t) \\ & + \int_0^x \!\! u(y,t) \delta_1(x,y) dy + \int_x^1 u(y,t) \delta_2(x,y) dy,
\end{split}
\end{flalign}
where the residual functions are:
\begin{flalign}
\label{HD0_T}
& \delta_0(x) =f(x)+P(x,0)-\int_0^x P(x,y)f(y)dy \nonumber\\[-0.2cm]
&            \quad \qquad -\int_x^1Q(x,y)f(y)dy, \quad \forall x \in [0,1], \\[-0.2cm]
\label{HD2_T}
& \delta_1(x,y)\!=\! h_1(x,y)\! +\! P_x(x,y)\! +\! P_y(x,y)\!-\!\!\int_0^y\!\!P(x,s)h_2(s,y)ds \nonumber \\[-0.2cm]
&  -\int_y^x\!\!P(x,s)h_1(s,y)ds \!- \!\!\int_x^1\!\!Q(x,s)h_1(s,y)ds, \ \forall (x,y) \in \Omega_L, \\[-0.2cm]
\label{HD3_T}
&\delta_2(x,y)\!=\! h_2(x,y)\! +\! Q_x(x,y)\!+\! Q_y(x,y)\! -\!\!\int_0^x\!\!P(x,s)h_2(s,y)ds \nonumber \\[-0.2cm]
&  -\int_x^y \!\!Q(x,s)h_2(s,y)ds\!-\!\!\int_y^1\!\!Q(x,s)h_1(s,y)ds, \ \forall (x,y) \in \Omega_U, \\[-0.1cm]
\label{HD1_T}
& \delta_3(x)=Q(x,1).
\end{flalign}
Thus, the target system \eqref{TH_1} is achievable if the continuous Kernels $P$ and $Q$ satisfy the so-called Kernel-PIDE\footnote{For these coupled hyperbolic PIDEs, a method of analysis, computation and an equivalent sufficient (conservative) condition for a unique solution have been given in \cite{BK-2015}.}:
\begin{align}
\label{H1_ker0}
\begin{split}
\delta_1(x,y)&=0, \qquad \forall \ (x,y) \in \Omega_L,\\
%\label{H1_ker1}
\delta_2(x,y)&=0, \qquad \forall \ (x,y) \in \Omega_U,\\
%\label{H1_ker2}
\delta_0(x) =0, \ \delta_3(x) &= 0, \qquad \forall \ x \in \Omega.
\end{split}
\end{align}
%

%%%%%%%%%%%%%%%%%%%%%%%%%%%%%%%%%%%%%%%%%%%%%%%%%%%%%%%%%%%%%%%%%%%%%%%%%%%%%%%%
\subsection{Kernel-PIDE as a Convex Optimization Problem}
\begin{proposition}
\label{Prop_RBound_2}
Let

\noindent
\begin{flalign}
\label{Inv_Fred_Op}
& u(x,t)\!=\!w(x,t)\!+\!\!\int_0^x \! R(x,y)w(y,t)dy\!+\!\! \int_x^1 \! S(x,y)w(y,t)dy
\end{flalign}

\noindent
be the inverse transformation of \eqref{Fred_Op} in terms of the Kernels $R$ and $S$ (as it is proposed in \cite{TBK-2014,BK-2014,BK-2015} under specific conditions on the system \eqref{SH_1}). Let $\Delta_1=\int_0^1\int_0^x \delta_1^2 (x,y) dy dx$ and $\Delta_2=\int_0^1\int_x^1 \delta_2^2 (x,y) dy dx$ be the mean square of the residual functions \eqref{HD2_T} and \eqref{HD3_T}, respectively. Considering $\delta_0(x)=0$ and $\delta_3(x)=0$ in \eqref{HD0_T}-\eqref{HD1_T}, the transformed system \eqref{H1_T} is exponentially stable in $\mathcal{L}^2$-norm topology if the residual functions satisfy:

\noindent
\begin{align}
\label{Res_bound_2}
& \sqrt{\Delta_1} + \sqrt{\Delta_2} \leq \frac{e^{-1}}{\left(1+\sqrt{\sigma_1}+\sqrt{\sigma_2}\right)},
\end{align}

\noindent
where {\small $\sigma_1\!=\!\int_0^1\int_0^x R^2(x,y) dy dx$} and {\small $\sigma_2\!=\!\int_0^1\int_x^1 S^2 (x,y) dy dx$}.
\end{proposition}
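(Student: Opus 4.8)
The plan is to follow the Lyapunov-functional methodology of the proof of Proposition~\ref{Prop_1}, but adapted to the first-order convective structure of the target system~\eqref{TH_1} and to the mean-square residuals $\Delta_1,\Delta_2$, which invites Hilbert--Schmidt (rather than sup-norm) estimates throughout.

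First I would introduce a weighted Lyapunov functional $V=\tfrac12\int_0^1 e^{\mu x} w^2(x,t)\,dx$, with a weight $e^{\mu x}$, $\mu>0$, tailored to the transport term of \eqref{H1_T}; the exponent $\mu$ is kept free and fixed at the end to sharpen the stability constant. Differentiating along \eqref{H1_T} with $\delta_0\equiv\delta_3\equiv 0$ and integrating the convective contribution $\int_0^1 e^{\mu x} w\,w_x\,dx$ by parts, the boundary condition $w(1,t)=0$ cancels the outer flux and leaves $-\tfrac12 w^2(0,t)-\mu V$. Hence the disturbance-free part gives $\dot V\le -\mu V$, the source of exponential decay. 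Since $e^{\mu x}\ge 1$ on $\Omega$, the functional is equivalent to $\tfrac12\|w\|^2$, so that $\|w\|^2\le 2V$.

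Next I would estimate the two residual integral terms $\int_0^x \delta_1 u\,dy$ and $\int_x^1 \delta_2 u\,dy$ that remain in $\dot V$. The key step is to eliminate $u$ using the inverse transformation \eqref{Inv_Fred_Op}: treating $R$ and $S$ as Hilbert--Schmidt kernels on $\Omega_L$ and $\Omega_U$, their integral operators have $\mathcal{L}^2$-operator norms at most $\sqrt{\sigma_1}$ and $\sqrt{\sigma_2}$, whence $\|u\|\le(1+\sqrt{\sigma_1}+\sqrt{\sigma_2})\|w\|$ --- this produces the factor $(1+\sqrt{\sigma_1}+\sqrt{\sigma_2})$. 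Regarding $\delta_1,\delta_2$ likewise as Hilbert--Schmidt kernels, a weighted Cauchy--Schwarz inequality that splits the weight as $e^{\mu x}=e^{\mu x/2}\,e^{\mu x/2}$ (so that one factor reconstitutes $\sqrt{2V}$ and the other absorbs the kernel) bounds the whole residual term by $2e^{\mu/2}(1+\sqrt{\sigma_1}+\sqrt{\sigma_2})\sqrt{\Delta_1+\Delta_2}\,V$; finally $\sqrt{\Delta_1+\Delta_2}\le\sqrt{\Delta_1}+\sqrt{\Delta_2}$ gives the form appearing in \eqref{Res_bound_2}.

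Combining the two estimates yields $\dot V\le -\big(\mu-2e^{\mu/2}(1+\sqrt{\sigma_1}+\sqrt{\sigma_2})(\sqrt{\Delta_1}+\sqrt{\Delta_2})\big)V$, so that $\dot V\le -cV$ for some $c>0$ whenever $(\sqrt{\Delta_1}+\sqrt{\Delta_2})(1+\sqrt{\sigma_1}+\sqrt{\sigma_2})\le \tfrac{\mu}{2}e^{-\mu/2}$. Choosing $\mu$ to maximise $\tfrac{\mu}{2}e^{-\mu/2}$ gives $\mu=2$ and the optimal threshold $e^{-1}$, which is precisely \eqref{Res_bound_2}; exponential stability in the $\mathcal{L}^2$-norm then follows from $\dot V\le -cV$ and the equivalence of $V$ with $\tfrac12\|w\|^2$. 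I expect the principal obstacle to be exactly this bookkeeping of the exponential weight through the (weighted) Cauchy--Schwarz steps, since it is the distribution of $e^{\mu x}$ across the two factors --- together with the optimal choice $\mu=2$ --- that collapses the constant to $e^{-1}$ rather than a cruder multiple; a secondary point is the careful justification of the Hilbert--Schmidt operator-norm bounds that turn $\sigma_i$ and $\Delta_i$ into the clean $\sqrt{\sigma_i}$ and $\sqrt{\Delta_i}$ of \eqref{Res_bound_2}.
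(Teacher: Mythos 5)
Your proposal is correct and takes essentially the same route as the paper's proof: the same weighted Lyapunov functional $V=\tfrac12\int_0^1 e^{\alpha x}w^2\,dx$, integration by parts with $w(1,t)=0$ giving the $-\alpha V$ decay term, elimination of $u$ via the inverse transformation \eqref{Inv_Fred_Op}, Cauchy--Schwarz estimates producing the Hilbert--Schmidt quantities $\sqrt{\Delta_i}$ and $\sqrt{\sigma_i}$, and maximization of $\tfrac{\alpha}{2}e^{-\alpha/2}$ at $\alpha=2$ to obtain the threshold $e^{-1}$. The only difference is bookkeeping: the paper bounds the iterated-integral terms one by one (its $T_{2a},\ldots,T_{3c}$), obtaining $(\sqrt{\Delta_1}+\sqrt{\Delta_2})(1+\sqrt{\sigma_1}+\sqrt{\sigma_2})$ directly, whereas you aggregate the residual kernels into a single Hilbert--Schmidt operator bound $\sqrt{\Delta_1+\Delta_2}$ and then relax by subadditivity --- a marginally sharper intermediate estimate leading to the same conclusion.
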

\begin{proof}

\label{Proof_RBound_2}
Let $V=\frac{1}{2}\int_0^1 e^{\alpha x} w^2(x,t)dx$ be a Lyapunov functional for some $\alpha >0$. Its time-derivative $\dot{V}=\int_0^1 e^{\alpha x} w(x) w_t(x)dx$ along the trajectory \eqref{H1_T}, with $\delta_0(x)=0$ and $\delta_3(x)=0$, is given by:
\begin{flalign}
\label{RBound_der}
\dot V &\leq  \Scale[0.92]{\underbrace{\int_0^1 e^{\alpha x} w(x)w_{x}(x)dx}_{T_1}} \!+\! \Scale[0.92]{\underbrace{ \int_0^1 \! e^{\alpha x} w(x) \int_0^x u(y) \delta_1(x,y) dy dx}_{T_2}} \nonumber \\
 & \quad \Scale[0.92]{\underbrace{+  \int_0^1 \! e^{\alpha x} w(x) \int_x^1 u(y) \delta_2(x,y) dy dx}_{T_3}}.
\end{flalign}
Using integration by parts and the boundary condition in \eqref{TH_1} on the term $T_1$ yields:
\begin{flalign}
\label{RBound_der_1}
\begin{split}
T_1 &\leq  \Scale[0.95]{ \left.e^{\alpha x }w^2(x)\right|_{x=0}^{x=1}-\alpha/2 \int_0^1 e^{\alpha x} w^2(x) dx \leq -\alpha V(t)}.
\end{split}
\end{flalign}
Regarding the term $T_2$, changing the order of integration and plugging in $u=u(y,t)$ the expression given by the inverse transformation \eqref{Inv_Fred_Op}, this can be written as:

\begin{flalign}
\label{RBound_der_2}
T_2 &=  \Scale[0.92]{\underbrace{\int_0^1 w(y) \int_y^1 e^{\alpha x} w(x) \delta_1(x,y) dx dy}_{T_{2a}}}\\
 & \quad + \Scale[0.92]{\underbrace{ \int_0^1 \left(\int_0^y R(y,s)w(s) ds\right)\left(\int_y^1 e^{\alpha x} w(x) \delta_1(x,y) dx \right) dy }_{T_{2b}} } \nonumber \\
 & \quad + \Scale[0.92]{\underbrace{ \int_0^1 \left(\int_y^1 S(y,s) w(s) ds \right) \left(\int_y^1 e^{\alpha x} w(x) \delta_1(x,y) dx\right) dy}_{T_{2c}}}.\nonumber
\end{flalign}

\noindent
Using the Cauchy-Schwarz's integral inequality \cite{S-2004}, yields:

\noindent
\begin{flalign}
\label{RBound_der_3}
T_{2a}^2 & \leq \Scale[0.9999]{\left(\int_0^1 e^{2\alpha x} w^2(x) dx \right) \left(\int_0^1 \left(\int_0^x w(y) \delta_1(x,y) dy \right)^2 dx \right) }\nonumber \\
& \leq \Scale[0.9999]{\max\limits_{x \in \Omega}\{e^{\alpha x}\} \max\limits_{y \in \Omega}\{e^{-\alpha y}\} \ 4 V^2(t)} \ \cdot \nonumber \\
& \quad \Scale[0.9999]{\left( \int_0^1 \left(\int_0^x e^{\alpha y} w^2(y) dy \right) \left(\int_0^x \delta_1^2(x,y) dy \right) dx \right)} \Rightarrow \nonumber \\
T_{2a} & \leq 2 e^{\alpha/2} V(t) \sqrt{\Delta_1},
\end{flalign}

\noindent
\begin{flalign}
\label{RBound_der_4}
T_{2b}^2  & \leq \Scale[0.85]{\left(\int_0^1\!\left(\int_0^y R(y,s) w(s) ds\right)^2\! dy \right) \left(\int_0^1 \! \left(\int_y^1 e^{\alpha x} w(x) \delta_1(x,y) dx \right)^2 \! dy \right)} \nonumber \\
& \leq \Scale[0.9999]{\max\limits_{x \in \Omega}\{e^{\alpha x}\} \max\limits_{s \in \Omega}\{e^{-\alpha s}\} \left(\int_0^1 \left(\int_0^y R^2(y,s) ds \right) \cdot \right.} \nonumber \\
& \quad \Scale[0.9999]{ \left(\int_0^y e^{\alpha s} w^2(s) ds \right) dy \Big) \Big(\int_0^1 \left(\int_y^1 e^{\alpha x} w^2(x) dx \right) \cdot } \nonumber \\
& \quad \Scale[0.9999]{\left(\int_y^1 \delta_1^2(x,y) dx \right) dy \Big)} \Rightarrow \nonumber \\
T_{2b} & \leq 2 e^{\alpha/2} V(t) \sqrt{\Delta_1} \Scale[0.9999]{\left(\int_0^1\int_0^y R^2(y,s) ds dy\right)^{\frac{1}{2}}},
\end{flalign}

\noindent
and

\noindent
\begin{flalign}
\label{RBound_der_5}
T_{2c}^2 & \leq \Scale[0.85]{\left(\int_0^1\!\left(\int_y^1 S(y,s) w(s) ds\right)^2\!dy \right)}
\Scale[0.85]{\left(\int_0^1 \! \left(\int_y^1 e^{\alpha x} w(x) \delta_1(x,y) dx \right)^2\!dy \right)} \nonumber \\
& \leq 4 e^{\alpha} V^2(t) \Scale[0.85]{\left(\int_0^1 \left(\int_y^1 S^2(y,s) ds \right) dy \right)\left(\int_0^1 \left(\int_y^1 \delta_1^2(x,y) dx \right) dy \right)} \Rightarrow \nonumber \\
T_{2c} & \leq 2 e^{\alpha/2} V(t) \sqrt{\Delta_1} \Scale[0.9999]{\left(\int_0^1\int_y^1 S^2(y,s) ds dy\right)^{\frac{1}{2}}},
\end{flalign}

\noindent
where $\Delta_1=\int_0^1\int_0^x \delta_1^2(x,y) dydx$. With respect to the term $T_{3}$, this can be written as:

\noindent
\begin{flalign}
\label{RBound_der_6}
T_3 &=  \Scale[0.92]{\underbrace{ \int_0^1 w(y) \int_0^y e^{\alpha x} w(x) \delta_2(x,y) dx dy}_{T_{3a}}} \\
 & \quad + \Scale[0.92]{\underbrace{ \int_0^1 \left(\int_0^y R(y,s)w(s) ds\right)\left(\int_0^y e^{\alpha x} w(x) \delta_2(x,y) dx \right) dy }_{T_{3b}} } \nonumber \\
 & \quad + \Scale[0.92]{\underbrace{ \int_0^1 \left(\int_y^1 S(y,s) w(s) ds \right) \left(\int_0^y e^{\alpha x} w(x) \delta_2(x,y) dx\right) dy}_{T_{3c}}}, \nonumber
\end{flalign}

\noindent
where upper bounds for every term, in relation with $\Delta_2=\int_0^1\int_x^1 \delta_2^2(x,y) dydx$, can be found following the same procedure described above in  \eqref{RBound_der_3}-\eqref{RBound_der_5}. Finally, using in \eqref{RBound_der} the upper bounds for the terms $T_1$, $T_2$ and $T_3$, the condition $\dot{V} \leq 0 $ is satisfied provided:
\begin{flalign}
%\label{RBound_der_7}
& \alpha - 2 e^{\alpha/2} \left(\sqrt{\Delta_1}+\sqrt{\Delta_2} \right) \left(1 + \sqrt{\sigma_1} + \sqrt{\sigma_2} \right) \geq 0, \nonumber
\end{flalign}
with {\small $\sigma_1=\int_0^1 \int_0^x R^2(x,y) dy dx$} and {\small$\sigma_2=\int_0^1 \int_x^1 S^2(x,y) dy dx$}. Since the factor $(1/2) \alpha e^{-\alpha/2}$ reaches the maximum value of $e^{-1}$ at $\alpha=2$, the expression \eqref{Res_bound_2} is obtained, i.e. a sufficient condition for exponential stability of \eqref{H1_T} in the $\mathcal{L}^2$-norm topology.

\end{proof}

\vspace{0.1cm}

\noindent
Based on this result, similarly to Proposition \ref{Prop_2}, a relaxation on the zero matching condition for the residual functions $\delta_1$ and $\delta_2$ can be considered and the Kernel-PIDE \eqref{H1_ker0} can be solved approximately in terms of polynomial Kernels.

\begin{proposition}
\label{Prop_5} Let $N(x,y)=\sum_{k=0}^{z(d)-1} n_k x^{\alpha_k} y^{\beta_k}$ and $M(x,y)=\sum_{k=0}^{z(d)-1} m_k x^{\alpha_k} y^{\beta_k}$ be polynomial approximations of $P$ and $Q$, respectively, of arbitrary even degree $d \in \mathbb{N}$, with coefficients $n_k$ and $m_k$ and powers in accordance with \eqref{Def_pol}. Let $\delta_1=\delta_1(x,y)$ and $\delta_2=\delta_2(x,y)$ be the resulting residual functions according to \eqref{HD2_T} and \eqref{HD3_T}, respectively, with degree $d_{\delta}\!=\!2\left\lceil(\max\{d\!+\!d_{h_1},d\!+\!d_{h_2}\}\!+\!1)/2\right\rceil$ and $\gamma_1 \geq 0$, $\gamma_2 \geq 0$. For any functions $f,h_1,h_2$ described by polynomials of degree $d_f, d_{h_1}, d_{h_2}$, respectively, the Kernel-PIDE \eqref{H1_ker0} can be formulated as the convex optimization problem\footnote{The expression $\delta=\delta(\mathbf{x}) \left|_{Q \approx M}^{P \approx N} \right.$ indicates that in the function $\delta$, the Kernels $P$ and $Q$ has been substituted by the polynomials $N$ and $M$, respectively.}:

\noindent
\begin{flalign}
\label{PIDE_opt_1}
& \underset{\gamma_j, N, M, T_1, T_2, s_j}{\text{minimize:}} \qquad \gamma_1+ \gamma_2 \\
&\text{subject to:} \nonumber \\
\label{PIDE_opt_4}
& \Scale[0.92]{\begin{bmatrix} 2\gamma_1\!-\!T_1(x,y)\!-\!s_1(x,y)g_1(x) & \delta_1(x,y)\\ \delta_1(x,y) & \gamma_1\!-\!s_2(x,y)g_2(x,y) \end{bmatrix}} \in \Sigma_s^{2 \times 2}, \\
\label{PIDE_opt_5}
& \Scale[0.92]{\begin{bmatrix} 2\gamma_2\!-\!T_2(x,y)\!-\!s_3(x,y)g_1(x) & \delta_2(x,y) \\ \delta_2(x,y) & \gamma_2\!-s_4(x,y)g_3(x,y) \end{bmatrix}} \in \Sigma_s^{2 \times 2}, \\[-0.1cm]
\label{PIDE_opt_6}
& s_1,s_2,s_3,s_4\in \Sigma_s, \\
\label{PIDE_opt_7}
& \int_0^1\int_0^x T_1(x,y)dy dx = 0, \ \int_0^1\int_x^1 T_2(x,y)dy dx = 0\\
\label{PIDE_opt_8}
& \delta_1=\delta_1(x,y) \left|_{Q \approx M}^{P \approx N} \right. \qquad  \qquad \qquad \mbox{as in (\ref{HD2_T})}, \\
\label{PIDE_opt_9}
& \delta_2=\delta_2(x,y) \left|_{Q \approx M}^{P \approx N} \right. \qquad  \qquad \qquad \mbox{as in (\ref{HD3_T})},\\
\label{PIDE_opt_10}
& \delta_0=\delta_0(x) \left|_{Q \approx M}^{P \approx N} \right.= 0 \qquad  \qquad \ \quad \mbox{as in (\ref{HD0_T})}, \\
\label{PIDE_opt_11}
& \delta_3=\delta_3(x) \left|_{Q \approx M}\right.= M(x,1) = 0 \ \ \mbox{as in (\ref{HD1_T})},
\end{flalign}
for some polynomials $s_j, j=1,\ldots,4$ of degree $d_{\delta}-2$, $T_1$ and $T_2$ of degree $2d_{\delta}$, $g_1(x)=x(1-x)$, $g_2(x,y)=y(x-y)$ and $g_3(x,y)=(1-y)(y-x)$. The optimal root mean square bounds of the residual functions are: \small{$\sqrt{\Delta_1} \leq \gamma_1$} and \small{$\sqrt{\Delta_2} \leq \gamma_2$}.
\end{proposition}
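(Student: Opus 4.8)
The plan is to mirror the development of Proposition~\ref{Prop_2}, but to certify a \emph{mean-square} ($\mathcal{L}^2$) bound on the residuals rather than an $\mathcal{L}^{\infty}$ one, since Proposition~\ref{Prop_RBound_2} requires smallness of $\sqrt{\Delta_1}+\sqrt{\Delta_2}$. First I would substitute the polynomial approximations $P\approx N$ and $Q\approx M$ into \eqref{HD0_T}--\eqref{HD1_T}; because $f,h_1,h_2$ are polynomial and each inner integral of a polynomial integrand (in $s$) is again polynomial in $(x,y)$, the residuals $\delta_0,\delta_1,\delta_2,\delta_3$ are polynomials, with $\delta_1,\delta_2$ of even degree $d_{\delta}=2\lceil(\max\{d+d_{h_1},d+d_{h_2}\}+1)/2\rceil$ as claimed. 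The exact matching $\delta_0=0$ and $\delta_3=M(x,1)=0$ amounts to setting finitely many polynomial coefficients to zero, hence the linear equality constraints \eqref{PIDE_opt_10}--\eqref{PIDE_opt_11}, which are affine in the coefficients of $N,M$. The substantive part is to show that \eqref{PIDE_opt_4}--\eqref{PIDE_opt_7} enforce $\sqrt{\Delta_1}\le\gamma_1$ and $\sqrt{\Delta_2}\le\gamma_2$.

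For $\delta_1$ I would argue as follows. A matrix sum-of-squares, as in \eqref{PIDE_opt_4}, is positive semidefinite for every $(x,y)$, and its diagonal entries are themselves SOS, hence globally nonnegative; write $A=2\gamma_1-T_1-s_1 g_1$ and $C=\gamma_1-s_2 g_2$ for these entries. Restricting to $\Omega_L$, where by Lemma~\ref{prop_K} one has $g_1\ge0$ and $g_2\ge0$, and using $s_1,s_2\in\Sigma_s$, the localizer products obey $s_1 g_1\ge0$ and $s_2 g_2\ge0$; consequently $0\le C\le\gamma_1$ and $0\le A\le 2\gamma_1-T_1$ on $\Omega_L$ (the bound $2\gamma_1-T_1=A+s_1 g_1\ge0$ following from $A\ge0$). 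The $2\times2$ positive semidefiniteness then yields the pointwise determinant inequality $\delta_1^2\le A\,C$, and combining the two one-sided bounds (both factors and both majorants being nonnegative) gives $\delta_1^2\le\gamma_1\,(2\gamma_1-T_1)$ on $\Omega_L$.

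Integrating this pointwise inequality over $\Omega_L$ is where the auxiliary polynomial $T_1$ does its work: since $\int_0^1\!\int_0^x dy\,dx=\tfrac12$ and $\int_0^1\!\int_0^x T_1\,dy\,dx=0$ by \eqref{PIDE_opt_7}, I obtain
\begin{align}
\Delta_1=\int_0^1\!\!\int_0^x\!\delta_1^2\,dy\,dx
 \le \gamma_1\!\int_0^1\!\!\int_0^x\!(2\gamma_1-T_1)\,dy\,dx
 =\gamma_1\bigl(2\gamma_1\cdot\tfrac12-0\bigr)=\gamma_1^2, \nonumber
\end{align}
so $\sqrt{\Delta_1}\le\gamma_1$. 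The identical computation applied to \eqref{PIDE_opt_5} over $\Omega_U$ (using $g_1\ge0$, $g_3\ge0$ there by Lemma~\ref{prop_K}, the area $\tfrac12$, and $\int_{\Omega_U}T_2=0$) yields $\Delta_2\le\gamma_2^2$, i.e. $\sqrt{\Delta_2}\le\gamma_2$. Minimizing $\gamma_1+\gamma_2$ therefore drives $\sqrt{\Delta_1}+\sqrt{\Delta_2}$ toward zero, and once \eqref{Res_bound_2} holds Proposition~\ref{Prop_RBound_2} certifies stability.

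Finally I would record convexity: $\delta_1,\delta_2$ are affine in the coefficients of $N,M$, while the localizers $s_1g_1,s_2g_2,s_3g_1,s_4g_3$ and the entries $2\gamma_i-T_i$ are affine in $\gamma_i$, in the coefficients of $T_i$, and in the Gram matrices of the $s_j$; thus each matrix in \eqref{PIDE_opt_4}--\eqref{PIDE_opt_5} depends affinely on the decision variables, requiring it to be matrix-SOS is a single linear matrix inequality, and \eqref{PIDE_opt_7} together with \eqref{PIDE_opt_10}--\eqref{PIDE_opt_11} are linear equalities, so the whole program is a semidefinite one with linear objective. The Archimedean property of Lemma~\ref{prop_K} is what legitimizes the Putinar-type localizers $g_1,g_2$ on $\Omega_L$ and $g_1,g_3$ on $\Omega_U$. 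The main obstacle, and precisely what this construction is designed to defeat, is that the natural requirement $\int\delta_1^2\le\gamma_1^2$ is quadratic in the kernel coefficients and hence not directly convex; placing $\delta_1$ on the off-diagonal of a $2\times2$ SOS matrix, with a zero-mean slack polynomial $T_1$ on the diagonal, converts this mean-square bound into an affine LMI constraint, while the determinant inequality together with the normalization $\int_{\Omega_L}T_1=0$ recovers exactly $\Delta_1\le\gamma_1^2$.
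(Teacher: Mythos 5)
Your proof is correct and takes essentially the same route as the paper's: the $2\times 2$ matrix with $\delta_i$ off-diagonal, Putinar-type localizers $g_1,g_2$ (resp.\ $g_1,g_3$) for the triangular domains, the Schur-complement/determinant inequality, and the zero-mean slack $T_i$ which makes integration over the area-$\tfrac12$ triangle yield exactly $\gamma_i^2$. The only difference is one of direction and emphasis: you argue soundness (feasibility of the matrix-SOS constraint implies $\Delta_i\leq\gamma_i^2$), while the paper starts from the pointwise condition $A_1\succ 0$ on $\Omega_L$, derives the same integrated bound, and then invokes the matrix-polynomial Putinar Positivstellensatz with the particular diagonal localizer matrices $S_1,S_2$ to justify \eqref{PIDE_opt_4}--\eqref{PIDE_opt_5} as the tractable certificate of that condition — the substance is identical.
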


\begin{proof}
The convex optimization problem formulation follows similar arguments as the ones given in the proof of Proposition \ref{Prop_2}. Regarding the optimal mean square bounds for $\delta_1$ and $\delta_2$, let
\begin{flalign}
\label{Pos_cond1}
& A_1= \begin{bmatrix} 2\gamma_1-T_1(x,y) & \delta_1(x,y)\\ \delta_1(x,y) & \gamma_1 \end{bmatrix} \succ 0, \forall (x,y) \in \Omega_L
\end{flalign}
be a symmetric real polynomial positive definite matrix on $\Omega_L$ (pointwise condition). Taking the Schur's complement of $A_1$, its integration on the domain $\Omega_L$ yields
\begin{flalign}
& A_1 \succ 0 \Leftrightarrow 2\gamma_1^2-\gamma_1 T_1(x,y)- \delta_1^2(x,y) > 0, \ \forall (x,y) \in \Omega_L, \nonumber \\
\label{Pos_cond2}
& \int\limits_0^1\int\limits_0^x \delta_1^2(x,y) dydx < \gamma_1^2 \!-\! \gamma_1\int\limits_0^1\int\limits_0^x T_1(x,y) dy dx.
\end{flalign}
If there exists a polynomial function $T_1$ satisfying \eqref{PIDE_opt_7}, it is clear that $\gamma_1$ is an upper bound of the root mean square value of $Y$ in $\Omega_L$. The positivity condition \eqref{Pos_cond1} is made computationally tractable via the matrix-polynomial version of Putinar's Positivstellensatz\footnote{This condition can also be "scalarized", i.e., expressed in terms of scalar polynomials \cite{BPT-2013,L-2015,SchH-2006}.} \cite{L-2015,SchH-2006}, based on the representation of $\Omega_L$ as in Lemma \ref{prop_K}, namely, for some $\rho >0 $, $A_1(x,y) \succ \rho I \succ 0, \ \forall (x,y)\in \Omega_L$, then:
\begin{flalign}
\label{SOS_Mcond1}
& \Big(A_1(x,y)\!-\!g_1(x)S_1(x,y)\!-\!g_2(x,y)S_2(x,y)\Big) \!\in \Sigma_s^{2 \times 2} \\
& \ S_1, \ S_2 \in \Sigma_s^{2 \times 2}.
\end{flalign}

\noindent
Thus, conditions \eqref{PIDE_opt_4} and \eqref{PIDE_opt_5} are obtained if the following particular forms for $S_1$ and $S_2$ are considered:
\begin{flalign}
S_1 &=\begin{bmatrix} s_1(x,y) & 0\\ 0 & 0 \end{bmatrix} \in \Sigma_s^{2 \times 2}, \quad s_1 \in \Sigma_s, \\
S_2 &= \begin{bmatrix} 0 & 0\\ 0 & s_2(x,y) \end{bmatrix} \in \Sigma_s^{2 \times 2}, \quad s_2 \in \Sigma_s,
\end{flalign}
the SOS matrix condition of which is immediately verified since $S_1=B B^T$, $S_2=C C^T$ with:
\begin{flalign}
B&= \begin{bmatrix} b_1(x,y) & \cdots & b_{m_1}(x,y) & 0 \\ 0 & \cdots & 0 & 0 \end{bmatrix} \in \mathbb{R}[\mathbf{x}]^{2 \times (m_1\!+\!1)}, \\
C&= \begin{bmatrix} 0 & 0 & \cdots & 0 \\ 0 & c_1(x,y)& \cdots & c_{m_2}(x,y) \end{bmatrix} \in \mathbb{R}[\mathbf{x}]^{2 \times (m_2\!+\!1)},
\end{flalign}
where $s_1=\sum_{j=1}^{m_1} b_j^2(x,y) \in \Sigma_s$ and $s_2=\sum_{j=1}^{m_2} c_j^2(x,y) \in \Sigma_s$; $b_j \in \mathbb{R}[\mathbf{x}], \forall j=1,\ldots,m_1$, $c_j \in \mathbb{R}[\mathbf{x}], \forall j=1,\ldots,m_2$, for some finite $m_1 \in \mathbb{N}$ and $m_2 \in \mathbb{N}$. Following the same arguments, conditions for $Z$ on $\Omega_U$ can be deduced. Therefore, according to the optimization objective \eqref{PIDE_opt_1}, the root mean square error of $\delta_1$ and $\delta_2$ are minimized.
\end{proof}

%%%%%%%%%%%%%%%%%%%%%%%%%%%%%%%%%%%%%%%%%%%%%%%%%%%%%%%%%%%%%%%%%%%%%%%%%%%%%%%%
\subsection{Approximate Inverse Transformation}
\label{Inv_Fredholm}
For known Kernels $P$ and $Q$, the inverse transformation of \eqref{Fred_Op} can be found by means of the direct substitution of \eqref{Inv_Fred_Op} in \eqref{Fred_Op}, which yields

\noindent
\begin{flalign}
\label{DI_Fred_0}
& \int_0^x w(y,t) \delta_1(x,y) dy + \int_x^1 w(y,t) \delta_2(x,y) dy =0,
\end{flalign}
with the equality satisfied if the residual functions
\begin{flalign}
\label{DI_Fred_1}
& \delta_1(x,y) \!=\! R(x,y)\!-\!P(x,y)\!-\!\int_0^y P(x,s) S(s,y) ds \nonumber \\[-0.1cm]
& \!-\!\int_y^x P(x,s)R(s,y) ds \!-\! \int_x^1 Q(x,s)R(s,y)dy, \forall (x,y) \in \Omega_L,\\[-0.4cm]
\label{DI_Fred_2}
& \delta_2(x,y) \!=\! S(x,y)\!-\!Q(x,y)\!-\!\int_0^xP(x,s)S(s,y) ds \nonumber \\[-0.1cm]
& \!-\!\int_x^y Q(x,s) S(s,y) ds \!-\! \int_y^1Q(x,s)R(s,y)ds, \forall (x,y) \in \Omega_U,
\end{flalign}

\noindent
are identically zero in their respective triangular domains. Since \eqref{DI_Fred_0} does not depend on any original and target systems, it can be used to find an approximation of the inverse Kernels $R$ and $S$, given the approximate direct ones $P\approx N$, $Q\approx M$.

\begin{proposition}
\label{Prop_6} Let $A(x,y)=\sum_{k=0}^{z(d)-1} a_k x^{\alpha_k} y^{\beta_k}$ and $B(x,y)=\sum_{k=0}^{z(d)-1} b_k x^{\alpha_k} y^{\beta_k}$ be the polynomial approximations of $R$ and $S$, respectively, of arbitrary even degree $d \in \mathbb{N}$, with coefficients $a_k$ and $b_k$ and powers in accordance with \eqref{Def_pol}. Let $\delta_1=\delta_1(x,y)$ and $\delta_2=\delta_2(x,y)$ be the resulting residual functions according to \eqref{DI_Fred_1} and \eqref{DI_Fred_2}, respectively, with degree  $d_{\delta}=2\left\lceil(\max\{d_N,d_M\}+d+1)/2\right\rceil$ and $\gamma_j \geq 0, j=1,\ldots,4$. For given direct approximate Kernels $N$ and $M$ of $P$ and $Q$ with degrees $d_N$ and $d_M$, respectively (solution of \eqref{PIDE_opt_1}-\eqref{PIDE_opt_11}), the integral equation \eqref{DI_Fred_0}-\eqref{DI_Fred_2} can be formulated as the convex optimization problem:

\noindent
\begin{flalign}
\label{IE_opt_1}
& \underset{\gamma_j, A, B, s_j}{\text{minimize:}} \qquad \gamma_1+ \gamma_2 + \gamma_3 + \gamma_4 \\
&\text{subject to:} \nonumber \\
\label{IE_opt_4}
& \left(\delta_1(x,y)-\underline{\rho}_1-[s_1(x,y) \ s_2(x,y)] \ g_L(x,y)\right) \in \Sigma_s,\\
\label{IE_opt_5}
& \left(\overline{\rho}_1-\delta_1(x,y)-[s_3(x,y) \ s_4(x,y)] \ g_L(x,y)\right) \in \Sigma_s, \\
\label{IE_opt_6}
& \left(\delta_2(x,y)-\underline{\rho}_2-[s_5(x,y) \ s_6(x,y)] \ g_U(x,y) \right) \in \Sigma_s, \\
\label{IE_opt_7}
& \left(\overline{\rho}_2-\delta_2(x,y)-[s_7(x,y) \ s_8(x,y)] \ g_U(x,y) \right) \in \Sigma_s, \\
\label{IE_opt_8}
& s_1,s_2,s_3,s_4,s_5,s_6,s_7,s_8 \in \Sigma_s,\\
\label{IE_opt_9}
& \begin{bmatrix} \ \gamma_1 & \underline{\rho}_1 \\ \ \underline{\rho}_1 & \gamma_1 \end{bmatrix} \succeq 0, \quad  \begin{bmatrix} \ \gamma_2 & \overline{\rho}_1 \\ \ \overline{\rho}_1 & \gamma_2 \end{bmatrix} \succeq 0,\\
\label{IE_opt_10}
& \begin{bmatrix} \ \gamma_3 & \underline{\rho}_2 \\ \ \underline{\rho}_2 & \gamma_3 \end{bmatrix} \succeq 0, \quad  \begin{bmatrix} \ \gamma_4 & \overline{\rho}_2 \\ \ \overline{\rho}_2 & \gamma_4 \end{bmatrix} \succeq 0,\\
\label{IE_opt_11}
& \delta_1=\delta_1(x,y) \left|_{R \approx A, \ S \approx B}^{P \approx N, \ Q \approx M} \right. \ \mbox{as in (\ref{DI_Fred_1})}, \\
\label{IE_opt_12}
& \delta_2=\delta_2(x,y) \left|_{R \approx A, \ S \approx B}^{P \approx N, \ Q \approx M} \right. \ \mbox{as in (\ref{DI_Fred_2})},
\end{flalign}
%\
for some polynomials $s_j, j=1,\ldots,8$ of degree $d_{\delta}-2$, $g_L=[g_1, g_2]$, $g_U=[g_1, g_3]$, $g_1(x)=x(1-x)$, $g_2(x,y)=y(x-y)$ and $g_3(x,y)=(1-y)(y-x)$. The optimal minimal bounds for the residual functions are:  $\overline{\delta_1}=\max\{\gamma_1,\gamma_2\}$ and $\overline{\delta_2}=\max\{\gamma_3,\gamma_4\}$.
\end{proposition}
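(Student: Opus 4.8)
The plan is to follow the template of the proof of Proposition~\ref{Prop_2}, adapted to the inverse transformation and to the fact that one of the two residuals now lives on the upper triangular domain. First I would observe that substituting the known direct polynomial Kernels $P\approx N$, $Q\approx M$ and the polynomial approximations $R\approx A$, $S\approx B$ into \eqref{DI_Fred_1}--\eqref{DI_Fred_2} produces genuinely polynomial residuals: each term is a product of polynomials, and integration against the variable limits $0,x,y,1$ again yields a polynomial. Tracking the degree of the highest product $N\cdot A$ (respectively $M\cdot B$) through a single integration in $s$ gives exactly the stated bound $d_{\delta}=2\lceil(\max\{d_N,d_M\}+d+1)/2\rceil$. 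Hence $\delta_1=\delta_1(x,y)$ is a polynomial on $\Omega_L$ and $\delta_2=\delta_2(x,y)$ a polynomial on $\Omega_U$, as encoded by \eqref{IE_opt_11}--\eqref{IE_opt_12}.

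Next I would invoke Lemma~\ref{prop_K}, which guarantees that $\Omega_L$ and $\Omega_U$ are compact basic semi-algebraic sets whose quadratic modules, generated by $g_L=[g_1,g_2]^{\top}$ and $g_U=[g_1,g_3]^{\top}$ respectively, are Archimedean. This is precisely the hypothesis required by Putinar's Positivstellensatz, so the pointwise bounds $\underline{\rho}_1\le\delta_1\le\overline{\rho}_1$ on $\Omega_L$ and $\underline{\rho}_2\le\delta_2\le\overline{\rho}_2$ on $\Omega_U$ admit SOS certificates of the form \eqref{IE_opt_4}--\eqref{IE_opt_7} for the SOS multipliers of \eqref{IE_opt_8}. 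Since these constraints are affine in the coefficients of $A$, $B$, the Gram matrices of the multipliers, and the scalars $\underline{\rho}_i,\overline{\rho}_i$, the feasibility problem is a semidefinite program.

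I would then convert each signed bound into a bound on its absolute value: the $2\times2$ positive semidefinite conditions \eqref{IE_opt_9}--\eqref{IE_opt_10} are equivalent, by nonnegativity of the $2\times2$ determinant, to $\gamma_1\ge|\underline{\rho}_1|$, $\gamma_2\ge|\overline{\rho}_1|$, $\gamma_3\ge|\underline{\rho}_2|$ and $\gamma_4\ge|\overline{\rho}_2|$. Minimizing the linear objective \eqref{IE_opt_1} therefore simultaneously drives all four extreme values toward zero, i.e. $\delta_1\to0$ uniformly on $\Omega_L$ and $\delta_2\to0$ uniformly on $\Omega_U$. As the vanishing of $\delta_1,\delta_2$ makes the integrand in \eqref{DI_Fred_0} identically zero, the recovered $A,B$ are the sought approximate inverse Kernels, and the optimal values deliver the stated bounds $\overline{\delta_1}=\max\{\gamma_1,\gamma_2\}$, $\overline{\delta_2}=\max\{\gamma_3,\gamma_4\}$.

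The main obstacle I anticipate is the bookkeeping in the first step: verifying that every nested integral term in \eqref{DI_Fred_1}--\eqref{DI_Fred_2} integrates to a polynomial and that their combined degree matches $d_{\delta}$, together with the care needed to pair $\delta_2$ with the correct constraint set $g_U$ (carrying the factor $g_3$ for $\Omega_U$) rather than the $g_L$ set used for $\delta_1$ on $\Omega_L$. The remainder is a direct transcription of the Putinar-based SOS machinery already justified in Proposition~\ref{Prop_2}.
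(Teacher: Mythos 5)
Your proposal is correct and follows exactly the route the paper takes: the paper's own proof of Proposition~\ref{Prop_6} simply states that it ``follows the same arguments'' as Proposition~\ref{Prop_2}, and your argument is precisely that template --- polynomial residual structure from \eqref{IE_opt_11}--\eqref{IE_opt_12}, Archimedean quadratic modules from Lemma~\ref{prop_K}, Putinar-based SOS certificates for the bounds, the $2\times 2$ PSD conditions yielding $\gamma_i \geq |\rho|$, and the linear cost driving the residuals to zero. Your explicit care in pairing $\delta_2$ with $g_U$ on $\Omega_U$ and the degree bookkeeping for $d_\delta$ are exactly the (minor) adaptations the paper leaves implicit.
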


\begin{proof}
The proof follows the same arguments of the one of Proposition \ref{Prop_2}, hence it is omitted.
\end{proof}

%%%%%%%%%%%%%%%%%%%%%%%%%%%%%%%%%%%%%%%%%%%%%%%%%%%%%%%%%%%%%%%%%%%%%%%%%%%%%%%%
\section{Numerical Results}
\label{Results}

\noindent
The numerical solution of the convex optimization problems proposed in this article has been obtained via the Yalmip toolbox for Matlab \cite{L-2004} using the SDP package part of the Mosek solver \cite{Mosek-2016}.

\subsection{Parabolic PDE with constant reactivity term}

This example considers $\lambda=20$ and $\epsilon=1$ in the system \eqref{SP_1} and $c=0$ in the target system \eqref{TP_1}. The direct kernel $K$ in (\ref{Vol_Op}) is approximated solving the convex optimization problem \eqref{PDE_opt_1}-\eqref{PDE_opt_13}. The bounds of the residual functions $\delta_1$ and $\delta_2$ are indicated in Figure \ref{figure_0}. The approximate kernel $N$ for $d=12$ is depicted in Figure \ref{figure_1}(a). Regarding the inverse transformation \eqref{Inv_Vol_Op}, using the previous direct approximate kernel, a moment sequence \eqref{L_mom} has been calculated solving the convex optimization problem \eqref{Mom_opt_1}-\eqref{Mom_opt_15} for $d_m=13$. Table \ref{table_2} shows a small part of this sequence, which is compared with the inverse kernel ${L}^\star$ obtained via the successive approximation method, the positive component of which
is given by $\breve{L}^\star(x,y)=-\overline{\lambda} y J_{1}(\sqrt{\Theta})/\sqrt{\Theta}+10$, with $ \Theta=\overline{\lambda}(x^2-y^2)$, where $J_1$ is the first order Bessel function \cite{SK-2010}. The whole moment sequence has at least 2 digits of precision with respect to $\breve{L}^\star$. Finally, based on the whole previous moment sequence, an approximate inverse kernel $L_r=\breve{L}_r-\sigma$ has been determined solving  \eqref{Mom_rec_1}-\eqref{Mom_rec_5} for $r=12$, the result of which is shown in Figure \ref{figure_1}(b).

\begin{figure}[thpb]
\centering
\vskip-0.4cm
\hskip-0.25cm
      \begin{tabular}{c}
       \includegraphics[width=8.5cm,height=6.5cm]{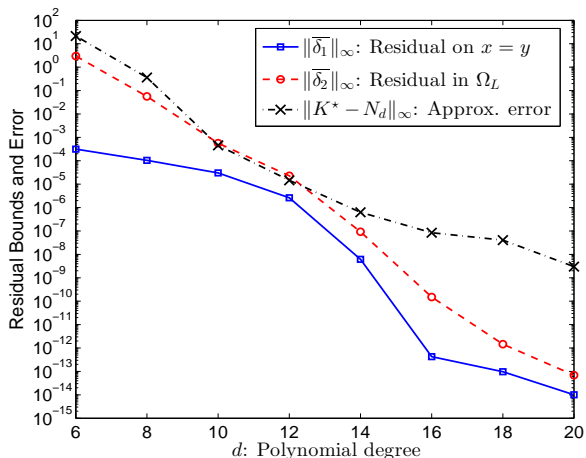}
       \end{tabular}
        \vspace{-0.25cm}
       \caption{Bounds for the residual functions in \eqref{P1_T}, solution of \eqref{PDE_opt_1}-\eqref{PDE_opt_13}, as a function of polynomial degrees $d$.}
       \label{figure_0}
   \end{figure}

\begin{table}[thpb]
\centering
\begin{tabular}{ c | c | c | c | c | c | c | c }
\hline $m_{i,j}$ &  $m_{0,0}$ & $m_{0,1}$ & $m_{0,2}$ & $m_{0,3}$ & $m_{0,4}$ & $m_{0,5}$ & $\sigma$ \\
\hline \hline
%$\breve{L}^\star$ & $4.1903$ &$1.2286$ &$0.5517$ & $0.3011$ & $0.1843$ & $10.0000$ \bigstrut  \\
$\breve{L}^\star$ & $4.19026$ & $1.22856$ & $0.55171$ & $0.30114$ & $0.18434$ & $0.12179$ & $10.00000$ \bigstrut \\
\hline
%$\breve{L}$ & $4.1940$ &$1.2298$ & $0.5523$ & $0.3015$ & $0.1846$  & $10.0076$ \bigstrut  \\
%new result
$\breve{L}$ & $4.19155$ &$1.22899$ & $0.55192$ & $0.30127$  & $0.18443$ & $0.12185$ & $10.00258$ \bigstrut  \\
\hline
\end{tabular}
\vskip+0.2cm
\caption{Extract of the moment sequence \eqref{L_mom}, optimal solution of \eqref{Mom_opt_1}-\eqref{Mom_opt_15}, for $r=0,\ldots,13$.}
\label{table_2}
\end{table}

\begin{figure}[thpb]
\centering
\vskip-0.2cm
%\hskip-0.1cm
      \begin{tabular}{c}
       \includegraphics[width=7.5cm,height=6.5cm]{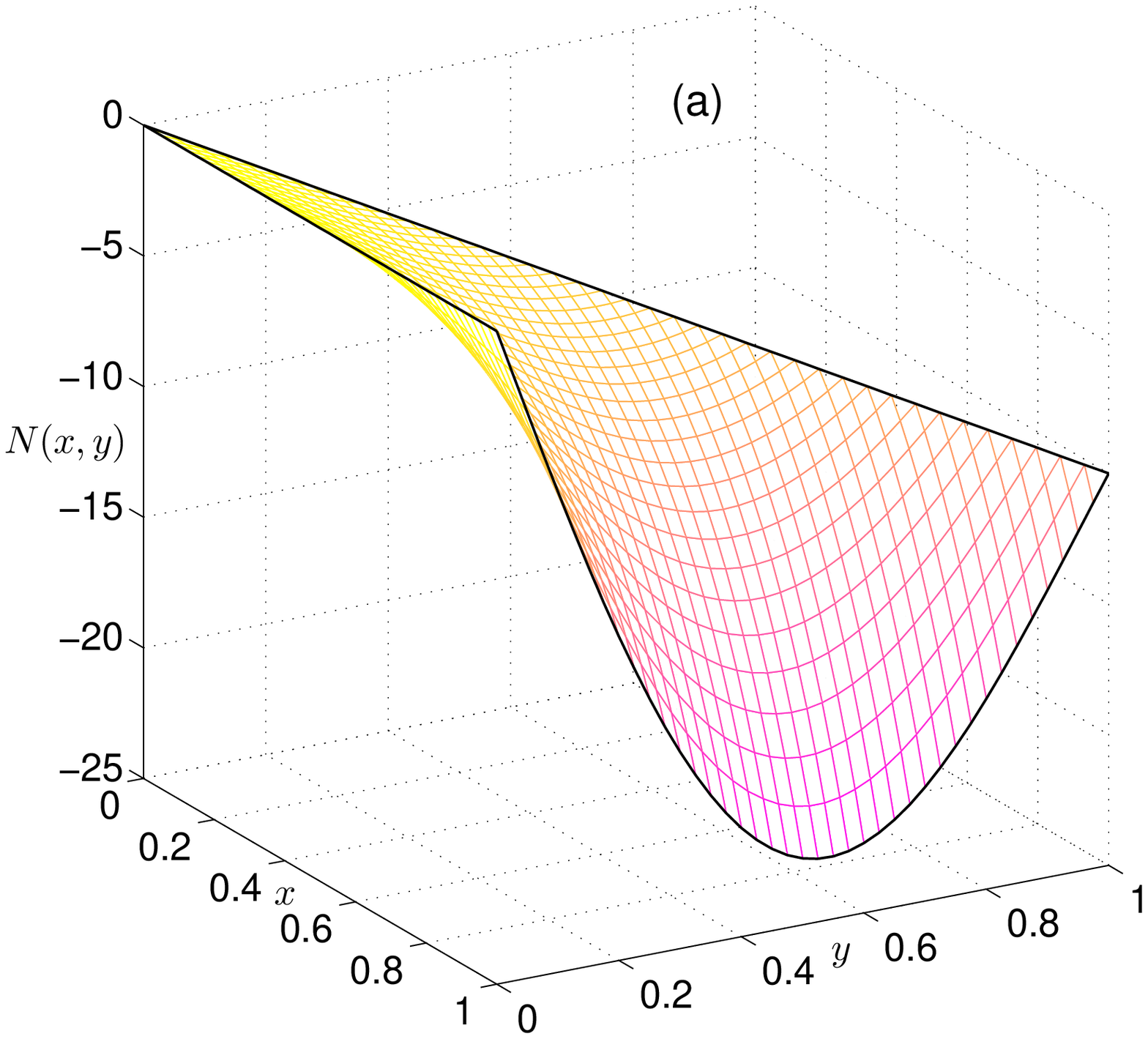}\\[-0.58cm]
       \includegraphics[width=7.5cm,height=6.5cm]{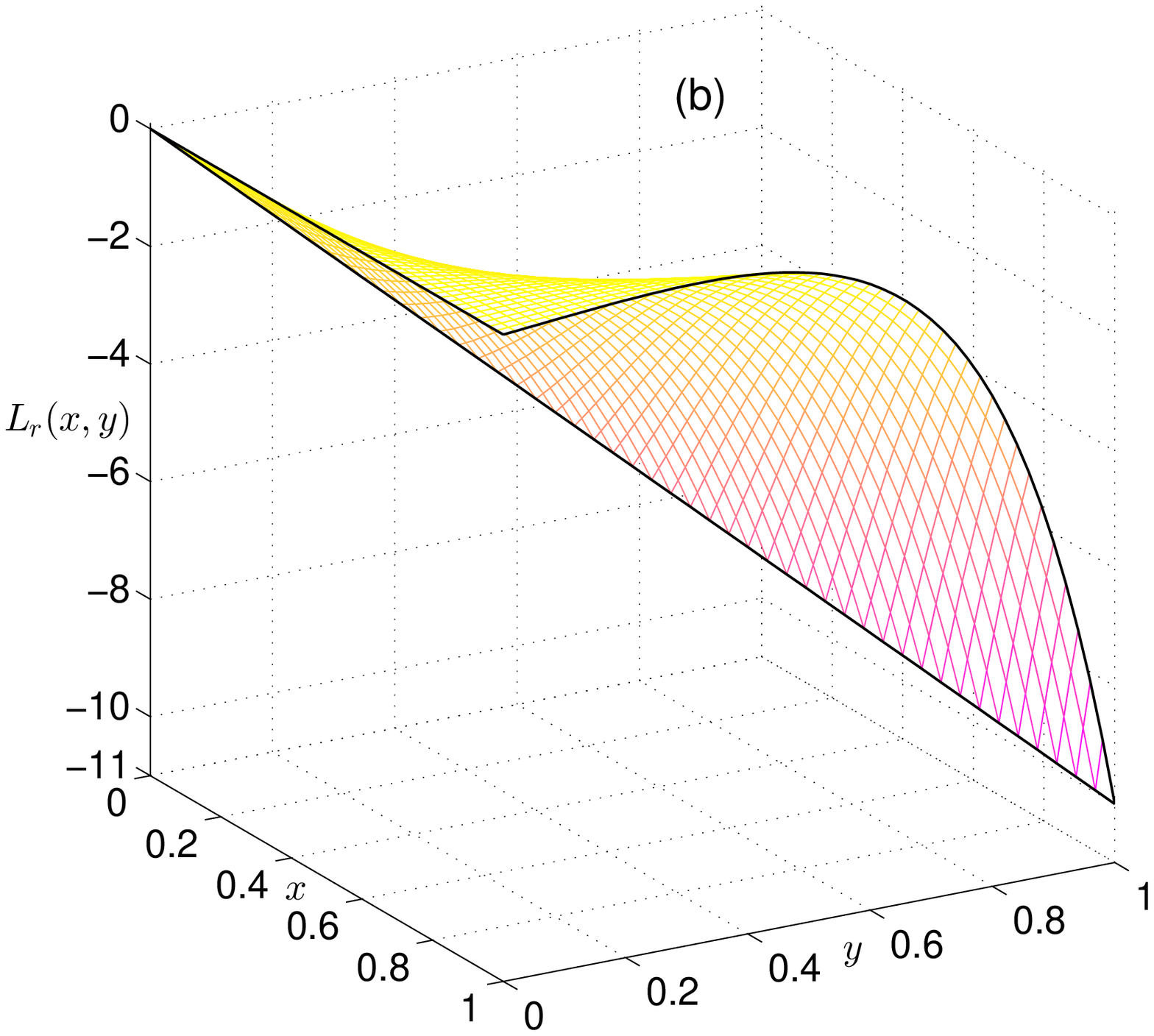}
       \end{tabular}
        \vspace{-0.38cm}
       \caption{(a) Direct approximate kernel $N$ for $K$ in \eqref{Vol_Op}, $d=12$. (b) Inverse approximate kernel $L_r$ for $L$ in \eqref{Inv_Vol_Op}, $r=12$.}
       \label{figure_1}
       \vskip-0.1cm
   \end{figure}

\subsection{Hyperbolic PIDE: Fredholm-type Operator}
This example considers the problem presented in \cite{BK-2014}\cite{BK-2015}:
%\small
\begin{flalign}
\label{fun_1}
f(x) &= a + \frac{b \ \sigma}{\sqrt{c} \ \cosh(\sqrt{c})}  \ \sinh\left(\sqrt{c}(1-x)\right) \\
\label{fun_2}
h_2(x) &= - \frac{b \ \sigma}{\cosh(\sqrt{c})} \ \cosh(\sqrt{c}x) \ \cosh\left(\sqrt{c}(1-y)\right) \\
\label{fun_3}
h_1(x) &= h_2(x) + b \ \sigma \cosh\left(\sqrt{c}(x-y)\right)
\end{flalign}
%\normalsize
with $a=1.25, \ b=0.1, \ c=0.1, \ \sigma=10$ (the application of the proposed approach is not limited to this case, which has been selected for comparison purposes). This problem (equations (71)-(73) of \cite{BK-2015}) corresponds to a first-order PDE coupled with a second order ODE, equivalent to the 1-dimensional hyperbolic PDE \eqref{SH_1}. The direct Kernels $P$ and $Q$ in \eqref{Fred_Op} have been approximated solving the convex optimization problem \eqref{PIDE_opt_1}-\eqref{PIDE_opt_11}. To implement this approach, the functions $f$, $h_1$ and $h_2$ in \eqref{fun_1}-\eqref{fun_3} have been approximated by a combination of univariate polynomials of degree $4$, with maximum approximation error $<4.2\!\cdot 10^{-8}$. The approximate Kernels $M$ and $N$ for a polynomial degree $d=10$ are shown in Figure \ref{figure_2}(a), with root mean square bounds for the residual functions: $\gamma_1=4.70\!\cdot\!10^{-10}$ and $\gamma_2=1.07\!\cdot\!10^{-9}$. Using the previous result, the inverse Kernels $R$ and $S$ have been approximated solving (\ref{IE_opt_1})-(\ref{IE_opt_12}) for a polynomial degree $d=10$. The result is shown in Figure \ref{figure_2}(b) with bounds of the residual functions: $\overline{\gamma}=\max\{\gamma_1,\gamma_2,\gamma_3,\gamma_4\}\leq 8.01\!\cdot\!10^{-10}$.

\begin{figure}[thpb]
\centering
%[thpb]
\vskip-0.45cm
\hskip-0.2cm
      \begin{tabular}{c}
       \includegraphics[width=8cm,height=6.7cm]{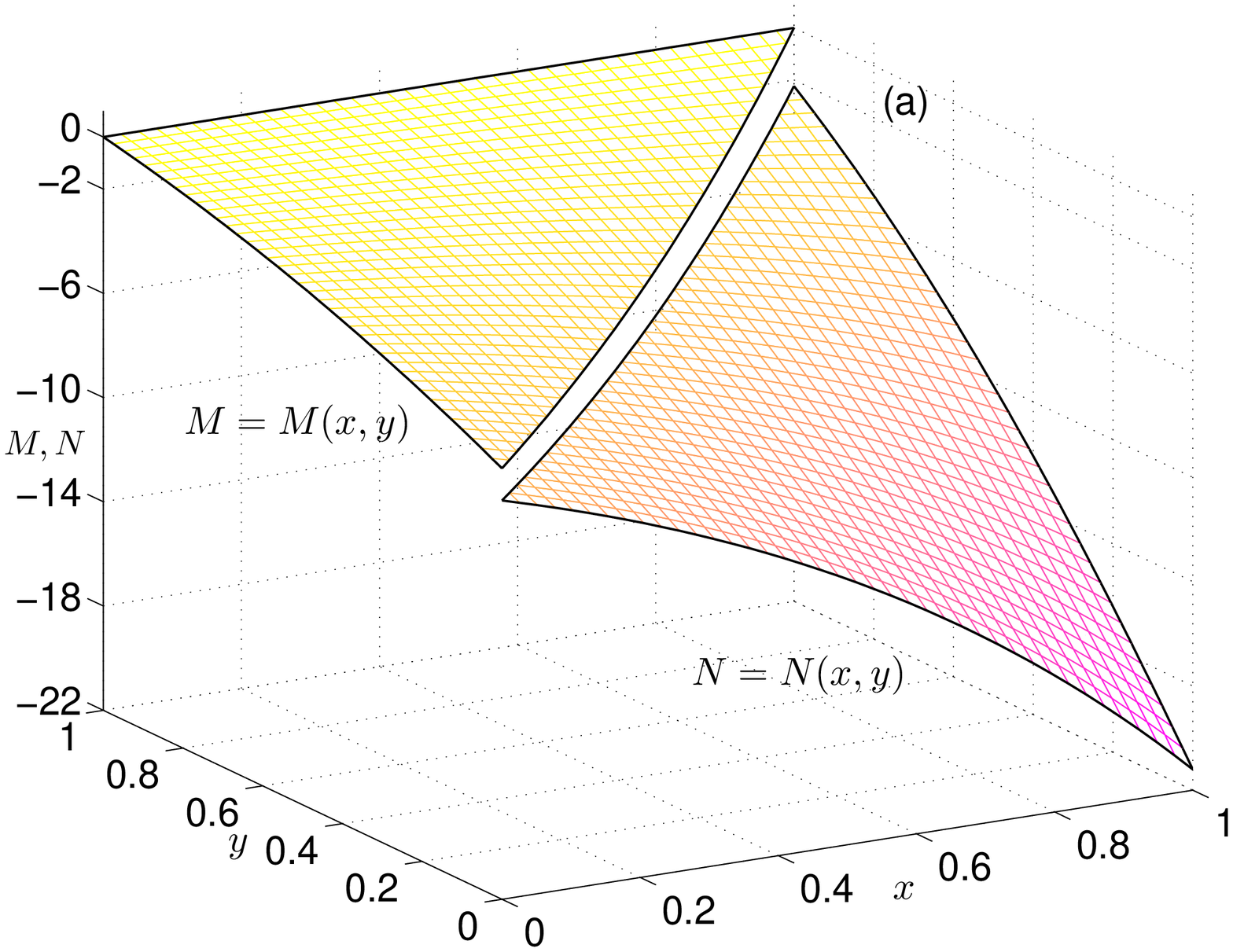}\\[-0.55cm]
       \includegraphics[width=8cm,height=6.7cm]{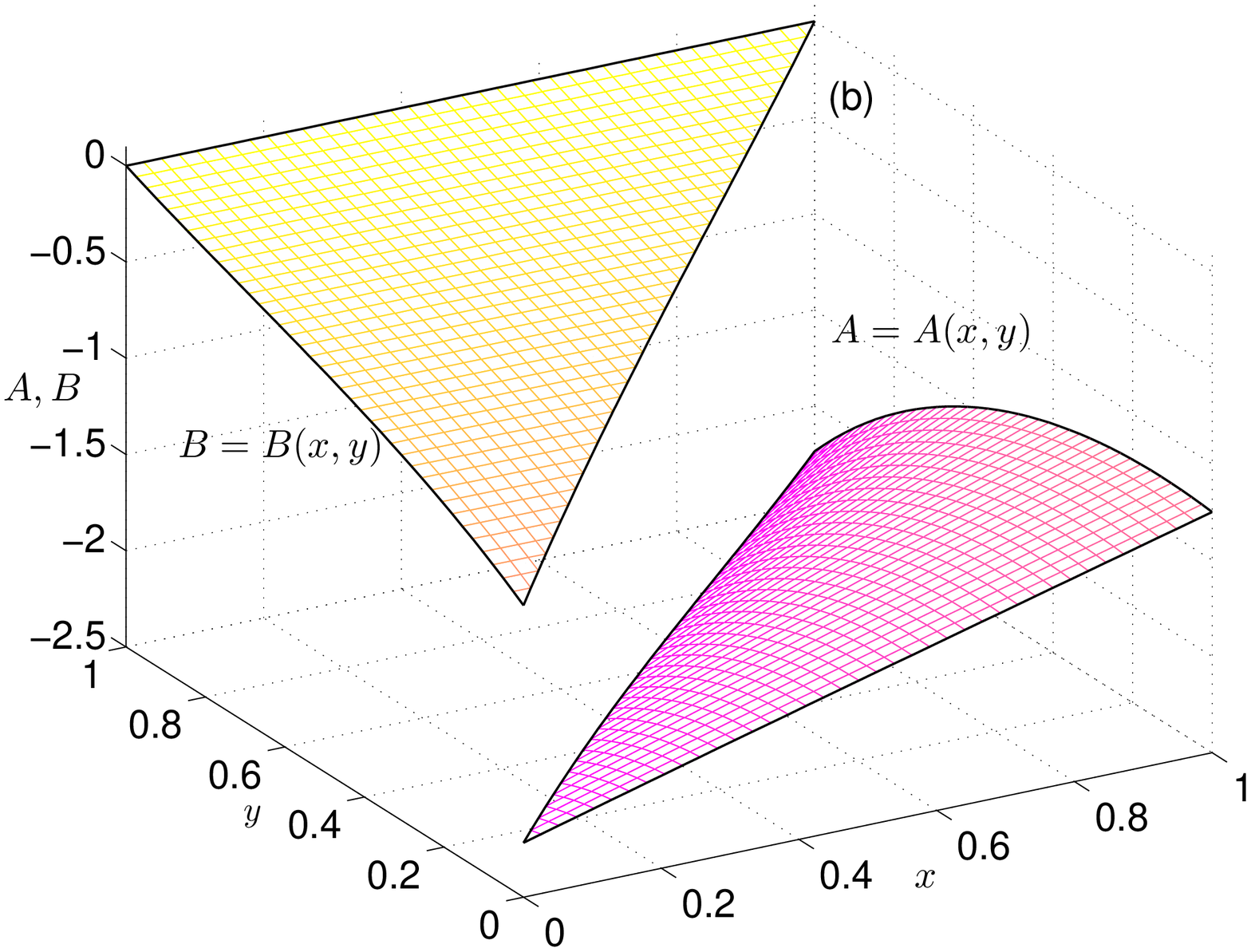}
       \end{tabular}
        \vspace{-0.3cm}
       \caption{(a) Direct approximate Kernels $M$ and $N$ for $d=10$ ($P \approx N$, $Q \approx M$ in \eqref{Fred_Op}). (b) Inverse approximate Kernels $A$ and $B$ for $d=10$ ($R \approx A$, $S \approx B$ in \eqref{Inv_Fred_Op}).}
       \label{figure_2}
       \vskip-0.1cm
   \end{figure}
%%%%%%%%%%%%%%%%%%%%%%%%%%%%%%%%%%%%%%%%%%%%%%%%%%%%%%%%%%%%%%%%%%%%%%%%%%%%%%%%
%%%%%%%%%%%%%%%%%%%%%%%%%%%%%%%%%%%%%%%%%%%%%%%%%%%%%%%%%%%%%%%%%%%%%%%%%%%%%%%%

\section{Conclusions}
\label{Conclu}
In this paper a convex optimization approach to Backstepping PDE design for systems with strict and non-strict feedback structure, involving Volterra and Fredholm operators has been presented. The approach proposed allows obtaining approximate solutions with sufficient precision to guarantee the stability of the system in the $\mathcal{L}^2$-norm topology. The numerical examples illustrate the performance of the approach proposed and the flexibility of SOS-convex optimization to manage problems with operators of different structure and objectives. The method is restricted to systems involving functions which can be approximated by polynomials with computationally tractable degree and Kernels continuous or piece-wise continuous. The main limitation of this method is the current state of Sum-of Squares tools, regarding the type of monomials used in the decompositions, and the convex optimization tools, in relation with managing a large number of parameters and parameters with big magnitudes.
%

%%%%%%%%%%%%%%%%%%%%%%%%%%%%%%%%%%%%%%%%%%%%%%%%%%%%%%%%%%%%%%%%%%%%%%%%%%%%%%%%
%%%%%%%%%%%%%%%%%%%%%%%%%%%%%%%%%%%%%%%%%%%%%%%%%%%%%%%%%%%%%%%%%%%%%%%%%%%%%%%%

%%%%%%%%%%%%%%%%%%%%%%%%%%%%%%%%%%%%%%%%%%%%%%%%%%%%%%%%%%%%%%%%%%%%%%%%%%%%%%%%
%%%%%%%%%%%%%%%%%%%%%%%%%%%%%%%%%%%%%%%%%%%%%%%%%%%%%%%%%%%%%%%%%%%%%%%%%%%%%%%%


\begin{thebibliography}{100}

	\bibitem{KS-2008} M. Krstic and A. Smyshlyaev, \emph{Boundary Control of PDEs: A Course on Backstepping Designs}. SIAM, 2008.

	\bibitem{SK-2010} A. Smyshlyaev and M. Krstic, \emph{Adaptive Control of Parabolic PDEs}. Princeton University Press, 2010.

	\bibitem{VK-2007} R. Vazquez and M. Krstic, \emph{Control of Turbulent and Magnetohydrodynamic Channel Flows. Boundary Stabilization and State Estimation}. Birkh\"{a}user, 2007.

	\bibitem{L-2009} W. J. Liu, \emph{Elementary Feedback Stabilization of the Linear Reaction-Convection-Diffusion Equation and the Wave Equation}. Springer, 2009.

	\bibitem{M-2010} T. Meurer, \emph{Control of Higher-Dimensional PDEs: Flatness and Backstepping Designs}. Springer, 2013.

	\bibitem{RN-1990} F. Riesz and B. Sz.-Nagy, \emph{Functional Analysis}. Dover Publications, 1990.

	\bibitem{Z-2012} S. M. Zemyan, \emph{The Classical Theory of Integral Equations: A Concise Treatment}. Birkh\"{a}user, 2012.

	\bibitem{C-1977} D. Colton, \quotes{The Solution of Initial-Boundary Value Problems for Parabolic Equations by the Method of Integral Operators,} \emph{Journal of Differential Equations}, vol. 26, no. 2, pp. 181-190, 1977.

	\bibitem{S-1984} T. I. Seidman, \quotes{Two results on exact boundary control of parabolic equations,} \emph{Journal Applied Mathematics and Optimization}, vol. 11, no. 1, pp. 145-152, 1984.

	\bibitem{R-1981} W. Harmon Ray, \emph{Advanced Process Control}. McGraw-Hill, 1981.

	\bibitem{C-2001} P. Christofides, \emph{Nonlinear and Robust Control of PDE Systems: Methods and Applications to Transport-Reaction Processes}. Birkh\"{a}user, 2001.

	\bibitem{CZ-1995} R. F. Curtain and H. J. Zwart, \emph{An Introduction to Infinite Dimensional Linear Systems Theory}. Springer-Verlag, 1995.

	\bibitem{BPDM-2007} A. Bensoussan, G. Da Prato, M.C. Delfour and  S.K Mitter, \emph{Representation and Control of Infinite Dimensional Systems}. 2nd edition. Birkh\"{a}user, 2007.

	\bibitem{TW-2009} M. Tucsnak and G. Weiss, \emph{Observation and Control for Operator Semigroups}. Birkh\"{a}user, 2009.

	\bibitem{L-2003} W. J. Liu, \quotes{Boundary feedback stabilization of an unstable heat equation,} \emph{SIAM Journal on Control and Optimization}, vol. 42, no. 3, pp. 1033-1043, 2003.

	\bibitem{GGK-2003} I. Gohberg, S. Goldbergand and M. Kaashoek, \emph{Basic Classes of Linear Operators}. Birkh\"{a}user, 2003.

	\bibitem{C-2002} C. Corduneanu, \emph{Functional Equations with Causal Operators}. CRC Press, Taylor \& Francis, 2002.

	\bibitem{BDM-1976} A. Bensoussan, M. C. Delfour, and S. K. Mitter, \quotes{The linear quadratic
optimal control problem for infinite dimensional systems over an infinite horizon; survey and examples,} in \emph{Proc. Conf. on Decision and Control}, 1976, pp. 746-751.

	\bibitem{SK-2004} A. Smyshlyaev and M. Krstic, \quotes{Closed-Form Boundary State Feedbacks for a Class of 1-D Partial Integro-Differential Equations,} \emph{IEEE Transactions on Automatic Control}, vol. 49, no. 12, pp. 2185-2202, 2004.

	\bibitem{KS-2009} M. Krstic and A. Smyshlyaev, \quotes{Backstepping boundary control for first-order hyperbolic PDEs and application to systems with actuator and sensor delays,} \emph{Systems \& Control Letters}, vol. 57, issue 9,  pp. 750-758, 2008.

	\bibitem{J-1999} A. Jerri, \emph{Introduction to Integral Equations with Applications}. 2nd edition. Wiley, 1999.

	\bibitem{KK-2001} M. A. Khamsi and W. A. Kirk, \emph{An Introduction to Metric Spaces and Fixed Point Theory}. John Wiley \& Sons, 2001.

	\bibitem{K-2014} R. Kress, \emph{Linear Integral Equations}. 3rd edition. Springer, 2014.

	\bibitem{CMM-2006} F. Cali, E. Marchetti and V. Muresan, \quotes{On some Volterra-Fredholm Integral Equation,}. \emph{International Journal of Pure and Applied Mathematics}, vol. 31, no. 2, pp. 173-184, 2006.

	\bibitem{GXZ-2012} C. Guo, C. Xie and C. Zhou, \quotes{Stabilization of a spatially non-causal reaction-diffusion equation by boundary control,} \emph{International Journal of Robust and Nonlinear Control}, vol. 24, no. 1, pp. 1-17, 2012.

	\bibitem{TBK-2014} D. Tsubakino, F. Bribiesca-Argomedo and M. Krstic, \quotes{Backstepping-Forwarding Control of Parabolic PDEs with Partially Separable Kernels,} in \emph{Proc. Conf. on Decision and Control}, 2014, pp. 5236-5241.

	\bibitem{BK-2014} F. Bribiesca-Argomedo and M. Krstic, \quotes{Backstepping-Forwarding Boundary Control Design for First-Order Hyperbolic Systems With Fredholm Integrals,} in \emph{Proc. American Control Conference}, 2014, pp. 5428-5433.

	\bibitem{BK-2015} F. Bribiesca-Argomedo and M. Krstic, \quotes{Backstepping-Forwarding Control and Observation for Hyperbolic PDEs with Fredholm Integrals,} \emph{IEEE Transactions on Automatic Control}, vol. 60, no. 8, pp. 2145-2160, 2015.

	\bibitem{BK-2011} N. Bekiaris-Liberis and M. Krstic, \quotes{Lyapunov Stability of Linear Predictor Feedback for Distributed Input Delays,} \emph{IEEE Transactions on Automatic Control}, vol. 56, no. 3, pp. 655-660, 2011.

	\bibitem{V-2006} R. Vazquez, \quotes{Boundary Control Laws and Observer Design for Convective, Turbulent and Magnetohydrodynamic Flows,} PhD Thesis, University of California, Dept. of Mechanical and Aerospace Engineering, San Diego, 2006.

	\bibitem{TH-2015} D. Tsubakinoa and S. Harab, \quotes{Backstepping observer design for parabolic PDEs with measurement of weighted spatial averages,} \emph{Automatica}, vol. 53, pp. 179-187, 2015.

	\bibitem{AAK-2014} S. Almezel, Q. H. Ansari and M. A. Khamsi (Editors), \emph{Topics in Fixed Point Theory}. Springer, 2014.

	\bibitem{HS-1978} P. R. Halmos and V. S. Sunder. \emph{Bounded integral operators on $L_2$ spaces}. Springer-Verlag, 1978.

	\bibitem{RY-2008} B. Rynne and M. Youngson, \emph{Linear Functional Analysis}. 2nd edition. Springer, 2008.

	\bibitem{ABK-2014} A. Bowers, N.J. Kalton, \emph{An Introductory Course in Functional Analysis}. Springer, 2014.

	\bibitem{KF-1957} A.N. Kolmogorov and S.V. Fomin, \emph{Elements of the Theory of Functions and Functional Analysis. Volume 1: Metric and Normed Spaces}. Graylock Press, 1957.

	\bibitem{AH-2009} K. Atkinson and W. Han, \emph{Theoretical Numerical Analysis: A Functional Analysis Framework}. 3rd edition. Springer, 2009.

	\bibitem{KS-2013} R. A. Kennedy and P. Sadeghi, \emph{Hilbert Space Methods in Signal Processing}. Cambridge University Press, 2013.

	\bibitem{BPT-2013} G. Blekherman, P. Parrilo and R. Thomas, \emph{Semidefinite Optimization and Convex Algebraic Geometry}. SIAM, 2013.

	\bibitem{KP-2015} Reza Kamyar and Matthew M. Peet, \quotes{Polynomial Optimization with Applications to Stability Analysis and Control - Alternatives to Sum of Squares,} \emph{Journal of Discrete and Continuous Dynamical Systems - Series B}, vol. 20, no. 8, pp. 2383-2417, 2015.

	\bibitem{P-2000} P. Parrilo, \quotes{Structured semidefinite programs and semialgebraic geometry methods in robustness and optimization,} PhD thesis, California Institute of Technology,  Dept. of Electrical Engineering, 2000.

	\bibitem{P-2003} P. Parrilo, \quotes{Semidefinite programming relaxations for semialgebraic problems,} \emph{Mathematical Programming}, Ser. B 96, pp. 293-320, 2003.

	\bibitem{L-2001} J.B. Lasserre, \quotes{Global optimization with polynomials and the problem of moments,} \emph{SIAM Journal on Optimization}, vol. 11, no. 3, pp. 796-817, 2001.

	\bibitem{L-2010} J. B. Lasserre, \emph{Moments, Positive Polynomials and Their Applications}. Imperial College Press, 2010.

	\bibitem{L-2015} J. B. Lasserre, \emph{An Introduction to Polynomial and Semi-Algebraic Optimization}. Cambridge University Press, 2015.

	\bibitem{S-1991} K. Schm\"{u}dgen, \quotes{The K-moment problem for compact semi-algebraic sets,} \emph{Mathematische Annale}, vol. 289, no.1, pp. 203-206, 1991.

	\bibitem{P-1993} M. Putinar, \quotes{Positive polynomials on compact semi-algebraic sets,} \emph{Indiana University Mathematics Journal}, vol. 42, pp. 969-984, 1993.

	\bibitem{ML-2010} M. Laurent, \quotes{Sums of squares, moment matrices and optimization over polynomials,} \emph{Springer, Emerging Applications of Algebraic Geometry, IMA Volumes in Mathematics and its Applications}, vol. 149, pp. 157-270, 2009. [Online]. Available: http://homepages.cwi.nl/~monique/

	\bibitem{RF-2010} H. L. Royden and P.M Fitzpatrick, \emph{Real Analysis}. 4th edition. Macmillan, 2010.

	\bibitem{ML-2005} M. Laurent, \quotes{Revisiting two theorems of Curto and Fialkow on moment matrices,} \emph{Proceedings of the American Mathematical Society}, vol. 133, no. 10, pp. 2965-2976, 2005.

	\bibitem{FS-2006} Friedrich Sauvigny. \quotes{Partial Differential Equations: Vol. 1 Foundations and Integral Representations}. \emph{Springer}, 2006.

	\bibitem{Dzung-2006} Dzung Minh Ha. \quotes{Functional Analysis. Volume I: A Gentle Introduction}. \emph{Matrix Editions}, 2006.

	\bibitem{Powell-1981} Michael J. D. Powell. \quotes{Approximation Theory and Methods}. \emph{Cambridge University Press}, 1981.

	\bibitem{WAS-1981} W. A. Sutherland, \emph{Introduction to Metric and Topological Spaces}. 2nd edition. Oxford Universiy Press, 1981.

	\bibitem{HLP-1952} G. H. Hardy, J. E. Littlewood and G. Polya, \emph{Inequalities}. 2nd edition. Cambridge University Press, 1952.

	\bibitem{B-1969} D. Boyd, \quotes{Best constants in a class of integral inequalities,} \emph{Pacific Journal of Mathematics}, vol. 30, no. 2, pp. 367-383, 1969.

	\bibitem{W-1994} T. X. Wang, \quotes{Stability in Abstract Functional Differential Equations. Part II. Applications,} \emph{Journal of Mathematical Analysis and Applications}, vol. 186, issue 3, pp. 835-861, 1994.

	\bibitem{Fitz-2009} P. M. Fitzpatrick, \emph{Advanced Calculus}. 2nd edition. American Mathematical Society, 2009.

	\bibitem{D-1998} S.S. Dragomir, \quotes{Some Integral Inequalities of Gr\"{u}ss type,} \emph{RGMIA Research Report Collection}, Vol. 1, no. 2, pp. 95-111, 1998.

	\bibitem{BC-2006} D. Bertsimas and C. Caramanis, \quotes{Bounds on linear PDEs via Semidefinite Optimization,} \emph{Springer-Verlag, Journal Mathematical Programming}, Series A, vol. 108, no. 1, pp. 135-158, 2006.

	\bibitem{HLM-2014} D. Henrion, J. Lasserre and M. Mevissen, \quotes{Mean Squared Error Minimization for Inverse Moment Problems,} \emph{Springer, Applied Mathematics \& Optimization}, vol. 70, no. 1, pp. 83-110, 2014.

	\bibitem{O-2014} P. Olver, \emph{Introduction to Partial Differential Equations}. Springer, 2014.

	\bibitem{BV-2004} S. Boyd and L. Vandenberghe, \emph{Convex Optimization}. Cambridge University Press, 2004. [Online]. Available: http://stanford.edu/~boyd/cvxbook/

	\bibitem{S-2004} J. Michael Steele, \emph{The Cauchy Schwarz Master Class. An Introduction to the Art of Mathematical Inequalities}. Cambridge University Press, 2004.

	\bibitem{SchH-2006} C.W. Scherer and C.W.J. Hol, \quotes{Matrix Sum-of-Squares Relaxations for Robust Semi-Definite Programs,} \emph{Mathematical Programming}, vol. 107, no. 1-2, pp.189-211, 2006

	\bibitem{L-2004} J. L\"{o}fberg. \quotes{YALMIP : A Toolbox for Modeling and Optimization in MATLAB,} in \emph{Proc. CACSD Conference}, 2004. [Online]. Available: http://users.isy.liu.se/johanl/yalmip

	\bibitem{Mosek-2016} \emph{The MOSEK optimization toolbox for MATLAB, version 8}. Denmark: MOSEK ApS. [Online]. Available: https://www.mosek.com/

\end{thebibliography}
\end{document}